\journal{Theoretical Computer Science}
\newtheorem{theorem}{Theorem}
\newtheorem{lemma}[theorem]{Lemma}
\newtheorem{clm}[theorem]{Claim}
\newtheorem{definition}[theorem]{Definition}
\newproof{proof}{Proof}
\begin{document}
\begin{frontmatter}

\title{Distributed distance domination in graphs with no $K_{2,t}$-minor}

\author[2]{Andrzej Czygrinow\fnref{thanks}}
	\fntext[thanks]{Research supported in part by Simons Foundation Grant \# 521777.}
    \ead{aczygri@asu.edu}
    \author[1]{Micha\l\,Han\'{c}kowiak}
    \ead{mhanckow@amu.edu.pl}
    \author[1]{Marcin Witkowski}
    \ead{mw@amu.edu.pl}
    \address[2]{School of Mathematical and Statistical Sciences\\Arizona State University, Tempe, AZ,85287-1804, USA.}
    \address[1]{Faculty of Mathematics and Computer Science \\ Adam Mickiewicz University, Pozna\'n, Poland.}

\begin{abstract}
We prove that a simple distributed algorithm finds a constant approximation of an optimal distance-$k$ dominating set in graphs with no $K_{2,t}$-minor. The algorithm runs in a constant number of rounds. We further show how this procedure can be used to give a distributed algorithm which given $\epsilon>0$ and $k,t\in \mathbb{Z}^+$ finds in a graph $G=(V,E)$ with no $K_{2,t}$-minor a distance-$k$ dominating set of size at most $(1+\epsilon)$ of the optimum. The algorithm runs in $O(\log^*{|V|})$ rounds in the {\it Local} model. In particular, both algorithms work in outerplanar graphs.   
 \end{abstract}
 \begin{keyword}
Distributed algorithms\sep distance dominating set\sep sparse graphs \sep local model
\end{keyword}

\end{frontmatter}

\section{Introduction}
The minimum dominating set (MDS) problem is  notoriously difficult and yet extremely important because of its numerous applications. Recall that for a graph $G=(V,E)$, set $D\subseteq V$ is called a {\it dominating set} if every vertex in $V\setminus D$ has a neighbor in $D$. For general graphs $G$ even finding a  $C\log{n}$-approximation for some constant $C$ where $n$ is the order of $G$  is NP-hard \cite{RS}. At the same time, the problem becomes much more tractable when restricted to certain classes of graphs. In particular, assumptions about the sparsity of graphs, measured in various ways, can make the MDS problem easier to approximate. The situation is not much different in the distributed setting, where on one hand, only a $O(\log{\Delta})$-approximation for general graphs is known  \cite{KMW}, and on the other, the problem becomes significantly easier for special cases of graphs, like for example planar graphs \cite{LOW}.

In this paper, we shall consider an important generalization of the MDS problem, the distance-$k$ minimum dominating set problem, and we shall give fast deterministic distributed approximations in the {\it Local} model in the case the underlying network satisfies certain sparsity conditions.  

The term distance-$k$ dominating set was given by Henning et al. \cite{henning}. For a graph $G=(V,E)$ and $k\in \mathbb{Z}^+$, set $D\subseteq V$ is called a {\it distance-$k$ dominating set} if every vertex $v\in V$ is within distance $k$ of a vertex from $D$. In particular, $1$-distance dominating set is a dominating set. The problem has many applications in networking and other areas of computer science. Maybe the most natural applications of  distance-$k$ dominating sets arise when considering the problem of allocating centers in a network that can share resources with the remaining vertices of the graph when needed \cite{alloc}.
\subsection{Related Work}
In the distributed setting, the MDS problem has been extensively studied for many different classes of sparse networks. Lenzen et al. \cite{LOW} gave a constant-factor distributed approximation of a minimum dominating set that runs in a constant number of rounds in planar graphs in the {\it Local} model of computations. Using more careful analysis, Wawrzyniak \cite{ww-ipl} improved the approximation ratio and showed that this algorithm gives in fact a 52-approximation.   Amiri et al. \cite{amiri} showed that a small modification of the algorithm from \cite{LOW} also gives a constant-factor approximation of a minimum dominating set in graphs of the bounded genus, and even more generally in graphs with no $K_{3,t}$-minor for some constant $t$. In fact, a further generalization is given in \cite{CHWW} where the authors give a constant-time distributed algorithm for $K_t$-minor-free graphs.  In addition, using the methods from \cite{CHW}, it is possible to improve the approximation factor in these classes of graphs at the expense of the time complexity. Specifically, it can be proved that there is a distributed algorithm which given $\epsilon>0$ finds a $(1+\epsilon)$-approximation of a MDS in a graph $G=(V,E)$ that is $K_t$-minor-free in $O(\log^*{|V|})$ rounds.
For graphs of a constant arboricity, a much more general class of graphs, there is a randomized algorithm of Lenzen and Wattenhofer \cite{LW-arb} that finds a constant approximation in time which is $O(\log{|V|})$ rounds with high probability.
In addition, tight results are known for outerplanar graphs. Recently, using an analysis of a maximal counterexample, Bonamy, Cook, Groenland, and Wesolek \cite{B} manged to prove very tight bounds for the approximation ratio for MDS in the case of outerplanar graphs. Specifically they showed the following two facts.
\begin{itemize}
    \item There is a deterministic $5$-approximation of the MDS in outerplanar graphs.
    \item There is no $(5-\epsilon)$-approximation for outerplanar graphs for any $\epsilon>0$.
\end{itemize}
Very little is known about distributed algorithms for distance-$k$ dominating sets when $k>1$ as the problem becomes significantly different when $k$ increases making it impossible to adapt solutions for $k=1$. Amiri et al. \cite{amiri-ossona} gave a constant-factor approximation for the minimum distance-$k$ dominating set problem in graphs $G=(V,E)$ of bounded expansion in $O(\log{|V|})$ rounds (for a fixed $k$) in the more restrictive {\it Congest$_{BC}$} model. 

The main motivation for our work  comes from the recent paper by Amiri and Wiederhake \cite{AmiriWieder} who managed to provide a first constant approximation algorithm in a constant number of rounds for distance-$k$ domination in graphs of bounded expansion of high girth (i.e. graphs that are sparse and are trees locally).  In fact, we will use the very same procedure from \cite{AmiriWieder}, but give a different argument in the first part of the paper as we will examine a different class of graphs. Note that the girth assumption in \cite{AmiriWieder} was related to a previous work on lower bounds that were established for graphs of high girths and 
is absolutely critical to their analysis. It is this assumption that we will get rid of in the current paper (at the expense of dealing with graphs with no $K_{2,t}$-minor rather than a much more general class of graphs of bounded expansion). Therefore, our paper is a step towards constant time, constant approximation algorithms for minimum distance-$k$ domination in graphs of arbitrarily small girth for which constant time approximation are known.

It is worth mentioning that the problem for $k>1$ seems to be genuinely different than the classical MDS problem, especially in the realm of sparse graphs. For example, the probabilistic algorithm from \cite{LW-arb} is specific to the case $k=1$, and the methods from \cite{CHW} used to ameliorate the approximation ratio when a constant approximation is furnished are again applicable only to the regular distance-$1$ domination. 

\subsection{Summary of results}
We will work in the {\it Local} model of computations and assume throught the paper that $k\geq 2$. Although the first algorithm is identical to the algorithm from \cite{AmiriWieder} which works in the {\it Congest} model, the algorithm of Amiri et al.  exploits the fact that graphs are locally trees to allow for a {\it Congest} model implementation. Since the graphs considered in this paper can have many short cycles, the algorithm works only in the {\it Local} model. In addition, our algorithm for the $(1+\epsilon)$-factor approximation heavily relies on the assumptions of the {\it Local} model. In this model, vertices correspond to computational units, and computations are synchronized. In each round, a vertex can send, receive messages from its neighbors, and can perform individual computations. In addition, we assume that vertices have unique identifiers and denote the identifier of $v$ by $ID(v)$. 

Although our results are stated for graphs with no $K_{2,t}$-minor, an important subclass of this class  is outerplanar graphs that have no $K_{2,3}$-minor and no $K_4$-minor, that is graphs that admit a planar embedding in $\mathbb{R}^2$ such that all vertices lie on the boundary of the outer face. It would be possible to phrase the main result of the first part of the paper in a more general language of graphs of bounded expansion that are locally $K_{2,t}$-minor-free, but this would require additional terminology and the benefit seems quite minuscule. 

We will prove the following results. First, we will show that there is a distributed algorithm which finds a constant-approximation of a minimum distance-$k$ dominating set in graphs with no $K_{2,t}$-minor in a constant time which depends on $t$ and $k$ (Theorem \ref{const-approx}). Second, we will show that a suitable modification of methods from \cite{CHW} gives a $(1+\epsilon)$-approximation of the $k$-MDS problem in $O(\log^*{|V|})$ rounds in graphs $G=(V,E)$ that are $K_{2,t}$-minor-free (Theorem \ref{main-approx-thm}).

Finally we show that it is possible to find a $(1+\epsilon)$-factor approximation that runs in  $O(\log^*{|V|})$ rounds in  $K_t$-minor-free graphs $G=(V,E)$ of a constant maximum degree (Theorem \ref{const-deg-thm}).

The rest of the paper is structured as follows. In the next section, we shall fix some terminology and prove a fact about $K_{2,t}$-minor-free graphs that will be useful in the main part of the paper. Section \ref{sec-const} contains the analysis of the constant approximation algorithm and Section \ref{sec-eps} discusses the $(1+\epsilon)$-approximation.  

\section{Preliminaries}
Let $G=(V,E)$ and $H=(W,F)$ be graphs. We say that $G$ contains an $H$-minor if $H$ can be obtained from a subgraph of $G$ by a sequence of edge contractions. More formally, $H$ is a minor of $G$ if for some subgraph $G'=(V',E')$ of $G$ we can partition $V'$ into sets $V_1, \dots, V_l$ so that each $G'[V_i]$ is connected and the graph obtained from $G'$ by contracting every $V_i$ to a vertex is isomorphic to $H$. (Note that we discard all parallel edges or loops if they appear when contracting connected subgraphs.) We will be mainly interested in graphs $G$ that are $H$-minor-free (i.e. have no $H$-minor) for $H=K_{2,t}$ where $t\in \mathbb{Z}^+$ is a constant. Clearly, if a graph has no $K_{2,t}$-minor then it has no $K_{2,t+1}$-minor and  so assuming there is no $K_{2,t+1}$-minor is weaker than supposing no $K_{2,t}$-minor. Recall that if $G$ is planar, then $G$  has not $K_{3,3}$-minor and if it is outerplanar, then it has no $K_{2,3}$-minor. Consequently, our results apply to outerplanar graphs as $t$ can be a large but fixed positive integer.

A subdivision of a graph $H$, denoted $TH$, is obtained from $H$ by replacing its edges with internally disjoint paths of length at least one.

We will follow terminology from \cite{diestel} but will recall main concepts used throughout the paper. In particular, a path between two vertices does not contain a vertex more than once, a walk can contain repeated vertices or edges.

For two distinct vertices $u,v\in V$, a $u,v$-path is a path which ends in $u$ and $v$.  
We use $d_G(u,v)$ to denote the distance between $u$ and $v$ in $G$, that is, the length of a shortest $u,v$-path (allowing for $u=v$). 
For a subset $Q\subseteq V$, a $Q$-path is a path $P$ such that $V(P)\cap Q$ contains only the endpoints of $P$. In particular, every vertex of $Q$ is a trivial $Q$-path. For two disjoint sets $Q_1,Q_2$, a $Q_1, Q_2$-path is a path that has one endpoint in each of the $Q_i$'s and no other vertices in $Q_1\cup Q_2$. In the case $Q_1=\{u\}$, we will use $u,Q_2$-paths for $\{u\}, Q_2$-paths. We denote by $uPv$ a the subpath of $P$ between vertices $u$ and $v$.

For a vertex $v\in V$, $N(v), N[v]$ denote the neighborhood of $v$ and the closed neighborhood of $v$ respectively, that is $N[v]=\{v\} \cup N(v)$. In addition, for $l\in \mathbb{Z}^+$, let $N^l[v]$ denote the set of vertices within distance $l$ of $v$ and we set $N^l(v)=N^l[v]\setminus \{v\}$.
Similarly, for a set of vertices $X\subseteq V$, we let $N^l[X]=\bigcup_{v\in X}N^l[v]$.

Finally, we will use $\gamma_k(G)$ to denote the size of a smallest distance-$k$ dominating set in $G$.

Graphs with no $K_{2,t}$-minor are sparse. To be precise, we have the following fact \cite{CRS}.
\begin{lemma}\label{crs-lem}
Let $t\geq 2$ and let $H$ be a graph of order at least one with no $K_{2,t}$-minor, then $|E(H)|\leq \frac{1}{2}(t+1)(|V(H)|-1).$
\end{lemma}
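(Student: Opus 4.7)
The plan is to prove the lemma by induction on $n = |V(H)|$ using a minimum-counterexample argument. The base case $n = 1$ is immediate since $|E(H)| = 0$. For the inductive step, assume the bound fails for some $K_{2,t}$-minor-free graph and pick such an $H$ with $n \geq 2$ minimal. Since vertex deletion and edge contraction both preserve being $K_{2,t}$-minor-free, minimality lets us apply the (assumed) bound to every graph derived from $H$ by one of these operations.

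From this minimality I would extract two structural constraints on $H$. First, applying the bound to $H - v$ gives $|E(H)| - \deg(v) \leq \frac{(t+1)(n-2)}{2}$, which combined with $|E(H)| > \frac{(t+1)(n-1)}{2}$ forces $\deg(v) > (t+1)/2$ for every $v$, hence $\delta(H) \geq \lceil (t+2)/2 \rceil$. Second, for any edge $uv$, the contraction $H/uv$ has $n-1$ vertices and $|E(H)| - 1 - |N(u) \cap N(v)|$ edges; applying the bound to it forces $|N(u) \cap N(v)| > (t-1)/2$ for every edge $uv$. So $H$ has simultaneously high minimum degree and many common neighbors along every edge.

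The remaining step, and the main obstacle, is to derive a contradiction from these two constraints by exhibiting an explicit $K_{2,t}$-minor. Fixing any vertex $v$, its neighborhood $N(v)$ has at least $\lceil (t+3)/2 \rceil$ vertices, and each $u \in N(v)$ has at least $\lceil (t+1)/2 \rceil$ neighbors in $N(v) \setminus \{u\}$ (namely the common neighbors of $u$ and $v$). Thus $H[N(v)]$ is dense, and using $v$ as an extra routing vertex one aims to find two vertices $x_{1}, x_{2}$ joined by $t$ internally disjoint paths in $H$, which yields the forbidden minor with $\{x_{1}\}$ and $\{x_{2}\}$ as the two-side branch sets. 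The inequalities from minimality are just tight enough to make this plausible; a cleaner alternative, if the direct construction proves delicate, is to appeal to a Menger-type argument: the absence of a $K_{2,t}$-minor bounds the number of internally disjoint paths between any two vertices by $t-1$, and combining this with the forced minimum-degree lower bound produces the contradiction.
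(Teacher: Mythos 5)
The paper does not prove this lemma at all: it is imported verbatim from Chudnovsky, Reed and Seymour \cite{CRS}, so there is no internal proof to compare against. Judged on its own, your opening is the standard and correct one: a minimal counterexample $H$ is closed under vertex deletion and edge contraction, deletion of $v$ forces $\deg(v)>\tfrac{t+1}{2}$, and contraction of $uv$ forces $|N(u)\cap N(v)|>\tfrac{t-1}{2}$ (both computations check out, modulo small off-by-one slips in the ceilings for even $t$). The problem is the last step, which you yourself flag as ``the main obstacle'' and leave at the level of ``plausible'': no contradiction is actually derived, and none can be derived from those two constraints alone. The graph $K_{t+1}$ has minimum degree $t$ and every edge lying in $t-1$ triangles --- far exceeding both of your lower bounds --- yet it has no $K_{2,t}$-minor, simply because $K_{2,t}$ has $t+2$ vertices; indeed $K_{t+1}$ attains the stated edge bound with equality and is (together with graphs glued from such blocks) the extremal example. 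So the two local constraints are fully consistent with $K_{2,t}$-minor-freeness, and any valid argument must additionally exploit the global edge count (in particular that a counterexample has $n\ge t+2$). The Menger-type fallback fares no better: $K_{t+1}$ has exactly $t-1$ internally disjoint paths of length at least two between any two vertices, so high minimum degree plus the path bound is again not contradictory.

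This missing step is precisely where all the substance of the Chudnovsky--Reed--Seymour proof lives; constructing the forbidden minor inside a minimal counterexample takes them a genuinely nontrivial global argument, not a local density computation. As written, your proposal reproduces the routine preamble of that proof and omits its content, so it does not establish the lemma. For the purposes of this paper the honest options are either to cite \cite{CRS} as the authors do, or --- since the paper never needs the sharp constant --- to prove a weaker linear bound $|E(H)|\le c_t\,|V(H)|$ by an elementary argument, which would suffice for Lemma \ref{main-lem-estimate} and everything downstream.
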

Although we are not going to attempt to optimize the constants and do not really need the full power of the previous lemma, we will use it to obtain a bound for the number of vertices on $Q$-paths of length at most $h$.

\begin{lemma}\label{main-lem-estimate}
Let $t\geq 2$, $h\in \mathbb{N}$ and let $H=(V,E)$ be a graph with no $K_{2,t}$-minor. Let $Q\subseteq V$ and let $Q_h$ denote the set of vertices which are on $Q$-paths in $H$ of length at most $h$. Then $|Q_h|\leq \alpha_{h,t} |Q|$ for some $\alpha_{h,t}$ that depends on $h$ and $t$ only. 
\end{lemma}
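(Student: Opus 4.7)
The plan is to induct on $h$, with base cases $h=0,1$ trivial since $Q_h = Q$. The key step is $h=2$, which will also serve as the core building block for larger $h$.

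For $h=2$, each $v \in Q_2 \setminus Q$ has at least two neighbors in $Q$; for each such $v$, I would pick $a_v, b_v \in N(v) \cap Q$ arbitrarily and form a multigraph $M$ on vertex set $Q$ whose multi-edge set is $\{\{a_v, b_v\} : v \in Q_2 \setminus Q\}$. This multigraph is a minor of $H$ (delete everything outside $Q \cup (Q_2 \setminus Q)$, then contract each $v$ into $a_v$), so its underlying simple graph is $K_{2,t}$-minor-free, and Lemma~\ref{crs-lem} bounds the number of simple edges by $\tfrac{t+1}{2}(|Q|-1)$. Moreover, if some pair $\{a,b\}$ arises as $\{a_v, b_v\}$ for $t$ different vertices $v$, those $t$ vertices together with $\{a,b\}$ form an outright $K_{2,t}$ subgraph of $H$, contradicting the assumption. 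Hence each edge of $M$ has multiplicity at most $t-1$, yielding $|Q_2 \setminus Q| \le \tfrac{(t-1)(t+1)}{2}(|Q|-1)$ and thus $\alpha_{2,t} = \tfrac{t^2+1}{2}$.

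For $h \ge 3$, the idea is to lift the $h=2$ argument to longer paths. For each $v \in R := Q_h \setminus Q$, I would fix a shortest $Q$-path $P_v$ through $v$ of length $\ell_v \le h$ with endpoints $a_v, b_v \in Q$, and form the subgraph $S = \bigcup_{v\in R} E(P_v)$ of $H$. Contracting each connected component of $S[V\setminus Q]$ into a single vertex produces a minor $M$ of $H$ on vertex set $Q \cup \mathcal{C}$, where $\mathcal{C}$ is the set of contracted components; this minor is again $K_{2,t}$-minor-free. Each $c \in \mathcal{C}$ has at least two distinct $Q$-neighbours (from the endpoints of any $P_v$ it contains), and for every pair $\{a,b\} \subseteq Q$ at most $t-1$ components of $\mathcal{C}$ have both $a$ and $b$ as neighbours (else contracting those $t$ components yields a $K_{2,t}$-minor). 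Combining this multiplicity bound with Lemma~\ref{crs-lem} applied to $M$ controls $|\mathcal{C}|$ linearly in $|Q|$. To bound $|R|$ itself, I plan to apply the inductive hypothesis \emph{inside} each component: treat its $Q$-boundary as a new starting set $Q^{(c)}$ in the $K_{2,t}$-minor-free graph $H[V(c) \cup N(c)\cap Q]$, where every vertex of $c$ still lies on a short $Q^{(c)}$-path, and invoke the lemma with a strictly smaller parameter (essentially $\lceil h/2 \rceil$, since $v$ sits within distance $\lfloor \ell_v/2 \rfloor$ of the nearer endpoint of $P_v$). Summing these per-component bounds then closes the induction.

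The main obstacle will be controlling the overlap between different paths $P_v$: shared internal vertices are precisely what force the component contractions, and they make any naive per-path counting overcount vertices. The resolution is that overlap only helps the simple-minor argument (it reduces $|\mathcal{C}|$), so one must charge each vertex of $R$ to exactly one component and then use induction with a smaller $h$ parameter to bound each component's size by a constant times its $Q$-boundary. This recursion gives $\alpha_{h,t}$ as a function growing polynomially in $t$ and at most exponentially in $\log h$, which suffices for the lemma.
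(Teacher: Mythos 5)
Your base cases and your $h=2$ argument are correct, and the $h=2$ step (multiplicity at most $t-1$ per pair by an explicit $K_{2,t}$, plus Lemma~\ref{crs-lem} applied to the simple minor on $Q$) is exactly the counting device the paper uses. The gap is in your inductive step for $h\geq 3$, specifically in the final move where you bound $|V(c)|$ for a component $c$ by invoking the lemma on $H[V(c)\cup Q^{(c)}]$ with parameter ``essentially $\lceil h/2\rceil$.'' Lying within distance $\lfloor \ell_v/2\rfloor$ of the nearer endpoint of $P_v$ is not the same as lying on a $Q^{(c)}$-path of length at most $\lceil h/2\rceil$: if $c$ consists of the internal vertices of a single $Q$-path of length $h$ that meets no other $P_w$, then $Q^{(c)}$ is just the two endpoints and the middle vertex of that path lies on no $Q^{(c)}$-path of length less than $h$. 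So the honest inductive call must be made with the same parameter $h$, and the recursion is circular. You cannot substitute the weaker hypothesis ``every vertex of $c$ is within distance $\lceil h/2\rceil$ of $Q^{(c)}$'' either, since no bound of the form $O_{h,t}(|Q^{(c)}|)$ holds for mere distance-balls (a star with its center in $Q^{(c)}$ already defeats it); the two-endpoint structure of $Q$-paths is essential and your halving step discards it.

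The paper closes the induction differently, and the missing ingredient is a maximality argument rather than a halving argument. Take a \emph{maximal} family $\mathcal{P}$ of internally disjoint $Q$-paths of length at most $h$. For a fixed pair $u,v\in Q$ there are at most $t+1$ such paths (more would give a $TK_{2,t}$), so contracting each bundle to the edge $uv$ and applying Lemma~\ref{crs-lem} bounds the vertex set $Q'$ of $\mathcal{P}$ by $(t+1)^2(h+1)|Q|/2$ --- this is your $h=2$ count, but applied once to whole internally disjoint paths rather than recursively. Maximality then forces every $Q$-path of length at most $h$ not in $\mathcal{P}$ to pass through an \emph{internal} vertex of some member of $\mathcal{P}$; splitting it at its internal $Q'$-vertices shows all of its vertices lie on $Q'$-paths of length at most $h-1$. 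That decrease from $h$ to $h-1$, with the enlarged but still linearly bounded set $Q'$ in place of $Q$, is the genuine parameter reduction your proposal lacks; everything else in your outline (the component contraction, the degree-$\geq 2$ and multiplicity-$\leq t-1$ counts) is sound but does not by itself terminate the recursion.
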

\begin{proof} We will induct on $h$. If $h=0$ then $Q_0=Q$, and $\alpha_{0,t} = 1$.  For the inductive step, let $\mathcal{P}$ be a maximal set of $Q$-paths of length at most $h$ which are internally disjoint. For $u,v\in Q$ let $P_{u,v}$ denote $u,v$-paths in $\mathcal{P}$ and note that $|P_{u,v}|\leq t+1$ because otherwise, graph $H$ has a subdivision of $K_{2,t}$ (as there can be only one edge $uv$, and every other path has length at least two), and so a $K_{2,t}$-minor. Contract paths from $P_{u,v}$ to edge $uv$ and apply Lemma \ref{crs-lem} to conclude that the number of edges in the contracted graph  is less than $(t+1)|Q|/2$. Consequently, since each path has length at most $h$, the number of vertices on paths from $\mathcal{P}$ is less than $(t+1)^2(h+1)|Q|/2$. Let $Q'$ denote the set of vertices on paths from $\mathcal{P}$ (including paths of length $0$, or $1$, so notice that it does contain every vertex from set $Q$). 

If $S$ is a $Q$-path of length at most $h$ which does not belong to $\mathcal{P}$, then $S$ contains an internal vertex of a path from $\mathcal{P}$ and so, all vertices on $S$ that have not been already counted belong to $Q'$-paths of length at most $h-1$ (see Figure~\ref{fig1}). Thus, by induction, the number of vertices on such paths is at most $\alpha_{h-1, t}|Q'|$. Consequently, by (\ref{eqQ}), the number of vertices on all paths is at most 
\begin{equation}\label{eqQ2}
\begin{split}
|Q_h| & \leq (1+\alpha_{h-1,t})|Q'| \\
& <(1+\alpha_{h-1,t})(t+1)^2(h+1)|Q|/2 \\
& \leq  (t+1)^2(h+1)\alpha_{h-1,t}|Q|,
\end{split}
\end{equation}
which gives us rough estimate on $\alpha_{h,t} \leq (t+1)^{2h}(h+1)!$, and so
\end{proof}
\begin{equation}\label{eqQ}
    |Q'|\leq (t+1)^2(h+1)|Q|/2.
\end{equation}
\begin{figure}
\begin{center}
 \scalebox{0.7}{\input{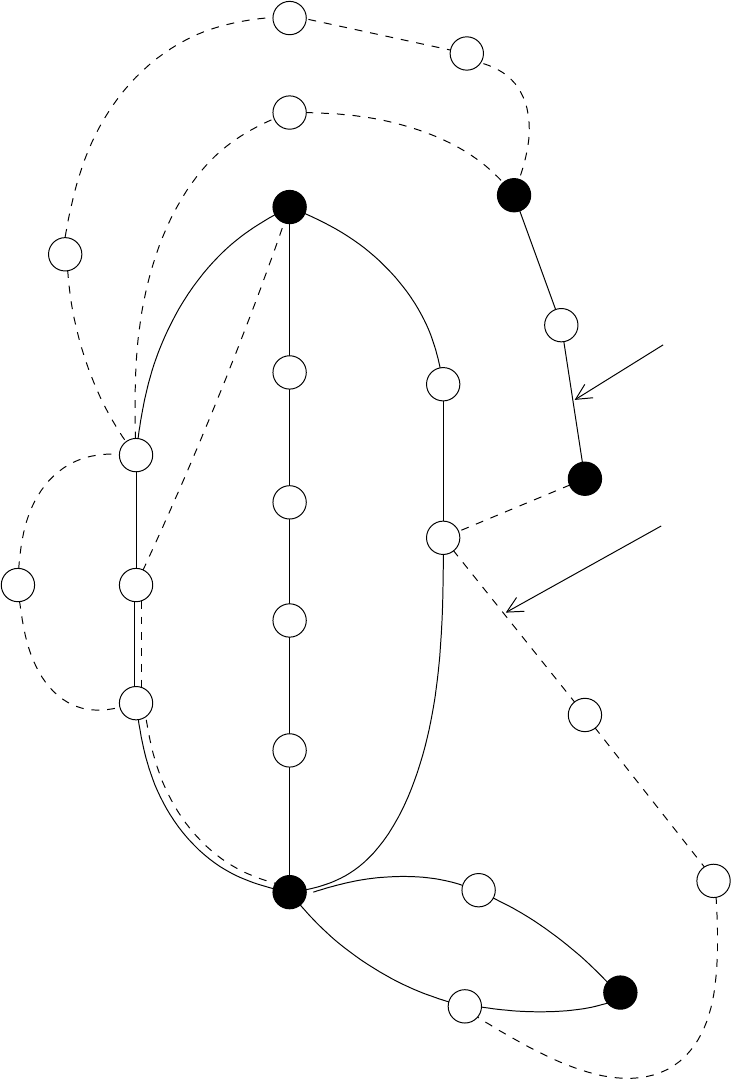_t}}
\caption{\normalsize Example for Lemma 2. The set $Q$ consists of the five black vertices, $t = 3$, $h = 5$. Every $Q$-path not in $P$ has its vertices covered by $P$ and $Q'$-path.} 
\label{fig1}
\end{center}
\end{figure}
\section{Constant factor approximation}\label{sec-const}
In this section, we will show that the simple algorithm (Algorithm 1 from \cite{AmiriWieder}) finds a distance-$k$ dominating set of size  $O(\gamma_k(G))$ in graphs $G$ with no $K_{2,t}$-minor. 

We work in the {\it Local} model, and as a result we may assume that each connected component of $G$ has diameter at least $4k$. Indeed, if component $C$ has diameter less than $4k$, then a simple $O(k)$-round algorithm can test that $C$ is a component and will compute an optimal distance-$k$ dominating set in $C$. In particular, $t\geq 2$.
In the procedure from \cite{AmiriWieder},  every vertex $v\in V$ selects vertex $w\in N^k[v]$ with $|N^k[w]|$  maximum and resolves ties using $ID(w)$. 
More formally, the algorithm can be described as follows:\\
\begin{algorithm}[H]
 \KwData{Graph $G=(V,E)$}
 \KwResult{Set $D$ }
 \label{alg1}
 \caption{{\sc DomSet}}
\begin{enumerate}
    \item For every $v\in V$, in parallel, find $q_v=|N^k[v]|.$
    \item For every $v\in V$ let $w:=w_v$ be the vertex in $N^k[v]$ such that 
    \begin{itemize}\item $q_w$ is maximum,
    \item and subject to this, $ID(w)$ is maximum.
    \end{itemize}
    \item Return $D:=\bigcup \{w_v\}$.
\end{enumerate}
\end{algorithm}
The algorithm clearly runs in $O(k)$ rounds and outputs a distance-$k$ dominating set. The only difficulty is to show that it indeed finds a distance-$k$ dominating set $D$ such that $|D|=O(\gamma_k(G))$, which we will do in the remainder of this section.

In our analysis we may assume that $G$ is connected because the same argument can be applied to each connected component.

Let $M$ be an optimal distance-$k$ dominating set in $G=(V,E)$. Create Voronoi cells (also called clusters) centered at $M = \{u_1,...,u_m\}$ with a vertex $v$ joining cell $C_{i}$ if $d_G(u_i,v)$ is the smallest and with ties resolved by selecting $u_i$ with the maximum ID. This gives a set of cells $\mathcal{C}=\{C_1, \dots, C_m\}$ such that each $G[C_i]$ is connected. For a cell $C\in \mathcal{C}$ let $v_C$ be the vertex in $C$ such that $d_G(v_C,w)\leq k$ for every $w\in C$, and subject to this, $|N^k_G(v_C)|$ is maximum, and subject to that, $ID(v_C)$ is maximum. If $C=C_i$, then $v_C$ might be the same as $u_i$ or it can be a different vertex but $u_i$ is always an option.

\begin{definition}
Let $C\in \mathcal{C}$. A vertex $v\in C$ is called a border vertex if  $v$ has a neighbor in $V\setminus C$.
\end{definition}
Let $C^*$ denote the set of border vertices in $C$.
We have the following simple observation.
\begin{lemma}\label{simple-lem}
Let $G=(V,E)$ be a connected graph of diameter at least $4k-1$ and let $C$ be a Voronoi cell. If $y,y^*\in C$ are such that $N^k[y]\subseteq N^k[y^*]$ and $d_G(y^*,w)<d_G(y,w)\leq k$ for every $w\in C^*$, then $N^k[y^*]\setminus N^k[y]\neq \emptyset.$
\end{lemma}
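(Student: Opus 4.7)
The plan is to exhibit a single vertex $z \in N^k[y^*] \setminus N^k[y]$ by locating a vertex outside the cell $C$ whose distance from $y$ is exactly $k+1$ (and therefore at most $k$ from $y^*$).

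I would first prove the following uniform improvement: for every $v \in V \setminus C$, $d_G(y^*,v) < d_G(y,v)$. Any shortest $y,v$-path must leave $C$ through some edge $ww'$ with $w \in C$ and $w' \notin C$, so $w \in C^*$. Since the path is shortest, $d_G(y,v) = d_G(y,w) + 1 + d_G(w',v)$, and the triangle inequality together with the hypothesis $d_G(y^*,w) < d_G(y,w)$ gives the claimed strict inequality. (Taken at $w=y$ the hypothesis would force $d_G(y^*,y)<0$, so $y \notin C^*$ and the decomposition above is nontrivial.)

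I would next show that there exists $v \in V \setminus C$ with $d_G(y,v) \geq k+1$. If no such $v$ existed, then $V \setminus C \subseteq N^k[y]$. Combined with $C \subseteq N^k[u_i]$, where $u_i$ is the center of $C$ (this uses that $M$ is a distance-$k$ dominating set and $u_i$ is the center closest to every vertex of $C$), and $d_G(y,u_i) \leq k$, routing any two vertices through $y$ and $u_i$ gives $\mathrm{diam}(G) \leq 3k$. Since $k \geq 2$, this contradicts the standing assumption $\mathrm{diam}(G) \geq 4k-1$.

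Having fixed such a $v$, I would pick a shortest $y,v$-path $y = v_0, v_1, \ldots, v_m = v$ and set $j := \max\{i : v_i \in C\}$. Since $v \notin C$ we have $j < m$, and $v_j$ has $v_{j+1} \notin C$ as a neighbor, so $v_j \in C^*$; the hypothesis then forces $j \leq k$. Because $m \geq k+1$, the vertex $z := v_{k+1}$ is well-defined, and by maximality of $j$ it lies outside $C$. Hence $d_G(y,z) = k+1$, and the first step yields $d_G(y^*,z) < k+1$, so $z \in N^k[y^*] \setminus N^k[y]$, as required.

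The main obstacle is the second step: the hypothesis is local to border vertices, so one has to squeeze a global contradiction out of the diameter assumption to guarantee that $y$ actually sees something past $C$. The other two steps are essentially shortest-path bookkeeping built on the single observation that any $y$-to-outside path must cross $C^*$.
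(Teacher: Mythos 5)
Your proof is correct and follows essentially the same route as the paper's: both use the diameter hypothesis to produce a vertex far outside $C$ and then walk along a shortest path to locate a witness at distance exactly $k+1$ from $y$ but at most $k$ from $y^*$. Your write-up is somewhat more explicit than the paper's (the uniform claim $d_G(y^*,v)<d_G(y,v)$ for all $v\notin C$, and anchoring the path at $y$ rather than at the cell $C$), but the underlying argument is the same.
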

\begin{proof} We have $N^{k}[C]\setminus N^{k-1}[C]\neq \emptyset$ because otherwise $N^{2k-1}[v_C]=V$. Let $v \in N^{k}[C]\setminus N^{k-1}[C]$ and let $Q$ be a shortest $v,C$-path in $G$. Let $\{w\}=Q\cap C$. Then for any $i=1,\dots, k$ there is a vertex $x_i$ on $Q$ such that $d_G(x_i, C) =d_G(x_i,w) = i$. Since $d_G(y^*,w)< d_G(y,w)\leq k$ there is a vertex on $Q$ at distance $k$ from $y^*$ and at distance $k+1$ from $y$. \end{proof}

We will now prove our main lemma which shows that the number of vertices from $C$ that can be added by the Algorithm \ref{alg1} is $O(|C^*|)$. The idea is to show that the number of potential choices for vertices selected by {\sc DomSet} can be bounded from above by repeated application of Lemma \ref{main-lem-estimate}. 
\begin{definition}
 Let $U_0:=C^*\cup \{v_C\}$ and for $i>0$ let $U_i$ be the set of vertices on $U_{i-1}$-paths of length at most $3k$.
\end{definition}
Note that we have $U_{i-1}\subseteq U_{i}$ because of the trivial paths.
\begin{lemma}
\label{lemmaU_k}
Output of the {\sc DomSet} contains only vertices of $U_k$.
\end{lemma}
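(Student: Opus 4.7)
The plan is to argue by contradiction: suppose $w \in D \cap C$ is selected by some voter $v$, so $w = w_v$, but $w \notin U_k$. The strategy is to exhibit a vertex $y^* \in N^k[v]$ with either $q_{y^*} > q_w$, or $q_{y^*} = q_w$ and $ID(y^*) > ID(w)$, contradicting the rule that $w_v$ is the $q$-maximiser in $N^k[v]$ with largest $ID$. Two preliminary observations frame the argument: if $v \in C$, then $v_C \in N^k[v]$ because $v_C$ dominates $C$, giving $q_w \geq q_{v_C}$; if $v \notin C$, every $v,w$-path of length at most $k$ enters $C$ through some border vertex $b \in C^* \subseteq U_0$, so $d_G(b, w) \leq k$. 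In both cases $w$ is within distance $k$ of a vertex of $U_0$, and $d_G(w, v_C) \leq k$.

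The inductive engine is Lemma \ref{simple-lem}. Starting from $y_0 := w$, I plan to build a sequence $y_0, y_1, \ldots, y_k \in C$ such that for each $i$, $N^k[y_i] \subseteq N^k[y_{i+1}]$ and $d_G(y_{i+1}, b) < d_G(y_i, b) \leq k$ for every $b \in C^*$. Lemma \ref{simple-lem} then yields $|N^k[y_{i+1}]| > |N^k[y_i]|$ at each step, so $q_{y_k} > q_w$. Membership $y_i \in U_i$ should follow from realising each shift as movement along a $U_{i-1}$-path of length at most $3k$: the natural such path concatenates a shortest $b,w$-segment and a shortest $w,v_C$-segment, each of length $\leq k$, giving a walk of length $\leq 2k$ through $w$ joining two vertices of $U_0$, with room for the shifts that occur during successive iterations. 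Because $w \notin U_k$ by assumption, the sequence can never terminate inside $U_0$, yet $q_{y_k} > q_w$ together with $y_k \in N^k[v]$ contradicts $w = w_v$.

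The main technical obstacle is constructing $y_{i+1}$ so that all hypotheses of Lemma \ref{simple-lem} hold simultaneously: it must dominate $N^k[y_i]$, be strictly closer to every border vertex, and remain inside $N^k[v]$ so that it is a legitimate candidate for voter $v$. The last requirement uses the triangle inequality together with $d_G(v, w) \leq k$, but enforcing the first two in a $K_{2,t}$-minor-free graph requires a genuine geometric argument based on the Voronoi cell structure and the defining optimality of $v_C$ (chosen as the $C$-dominator in $C$ maximising $|N^k_G(v_C)|$). The constant $3k$ in the definition of $U_i$-paths appears to be calibrated precisely to accommodate this $b \to w \to v_C$ detour of length at most $2k$ plus slack for successive replacements, and verifying this tight accounting through the induction is where most of the work will concentrate.
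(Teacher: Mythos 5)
Your overall target is the right one --- assume a selected vertex $y\in C$ lies outside $U_k$ and exhibit a competitor $y^*$ with $N^k[y]\subsetneq N^k[y^*]$ (hence $q_{y^*}>q_y$, and since $v\in N^k[y]\subseteq N^k[y^*]$ the vertex $y^*$ is a legitimate candidate for the voter $v$, contradicting the selection rule). But the proposal stops exactly where the proof begins: you never construct the replacement vertex. You acknowledge this yourself (``the main technical obstacle is constructing $y_{i+1}$ so that all hypotheses of Lemma~\ref{simple-lem} hold simultaneously\dots where most of the work will concentrate''), and that obstacle is the entire content of the paper's argument. The paper does it in one step, not $k$ steps: it fixes a shortest $y,v_C$-path $P$ of length $l\le k$ and takes $y^*$ to be the vertex of $P$ closest to $y$ lying in $U_{l-i}\cap V(P)$ with $d_P(y,y^*)\le i$ for the least possible $i$; the three required properties (every border vertex within distance $k$ of $y$ is strictly closer to $y^*$; $N^k[y]\subseteq N^k[y^*]$; properness of the inclusion) are then forced because any witness path violating them could be rerouted into a $U_{l-i}$-path of length at most $3k$ passing through a vertex of $S=yPy^*$ closer to $y^*$, contradicting the minimality of the choice. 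Without some version of this minimality argument your plan has no mechanism for producing $y_{i+1}$.

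Two further points in the plan would fail as stated. First, the indexing of your sequence is inverted: you want $y_i\in U_i$ with $y_0:=w$, but $w\notin U_k$ and $U_0\subseteq U_k$, so $y_0\notin U_0$; in the paper the index \emph{decreases} toward $v_C\in U_0$ along $P$, and the final contradiction is that $y$ itself would land in $U_k$. Second, you invoke Lemma~\ref{simple-lem} under the assumption that $d_G(y_i,b)\le k$ for \emph{every} border vertex $b\in C^*$, which need not hold; the paper's Claim~\ref{cl3} splits into two cases, producing the witness of properness directly (a vertex at distance at most $k$ from $y^*$ but more than $k$ from $y$, found on a path toward a far border vertex) when some $b\in C^*$ has $d_G(y,b)\ge k+1$, and only falling back on Lemma~\ref{simple-lem} when all border vertices are within distance $k$ of $y$. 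Your accounting of the constant $3k$ as a $b\to w\to v_C$ detour of length at most $2k$ plus slack is a reasonable intuition, but it does not substitute for these case analyses.
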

\begin{proof}
Assume towards contradiction that for some $C$ there is a vertex  $y\in C$ selected by {\sc DomSet} such that $y \notin U_k$. 
We have $d_{G}(y,v_C)\leq k$. Fix one $y,v_C$-path of the shortest length in $G$ and call it $P$. Let $l \leq k$ denote the length of $P$. We will now select a vertex which is closest to $y$ on $P$ and belongs to the sets $U_{l-i}$ for some $i\geq 0$. 
Let $i$ be the smallest non-negative integer such that there is a vertex $y^*$ in $U_{l-i}\cap V(P)$ which satisfies $d_P(y,y^*)\leq i$. We choose a vertex $y^*$ for which $d_P(y,y^*)$ is the smallest. Since $v_C\in U_0$ and $d_P(v_C,y)=l$, we can deduce that such $y^*$ always exists and $i\leq l \leq k$. 

\begin{figure}
\label{fig2}
\begin{center}
 \scalebox{0.8}{\input{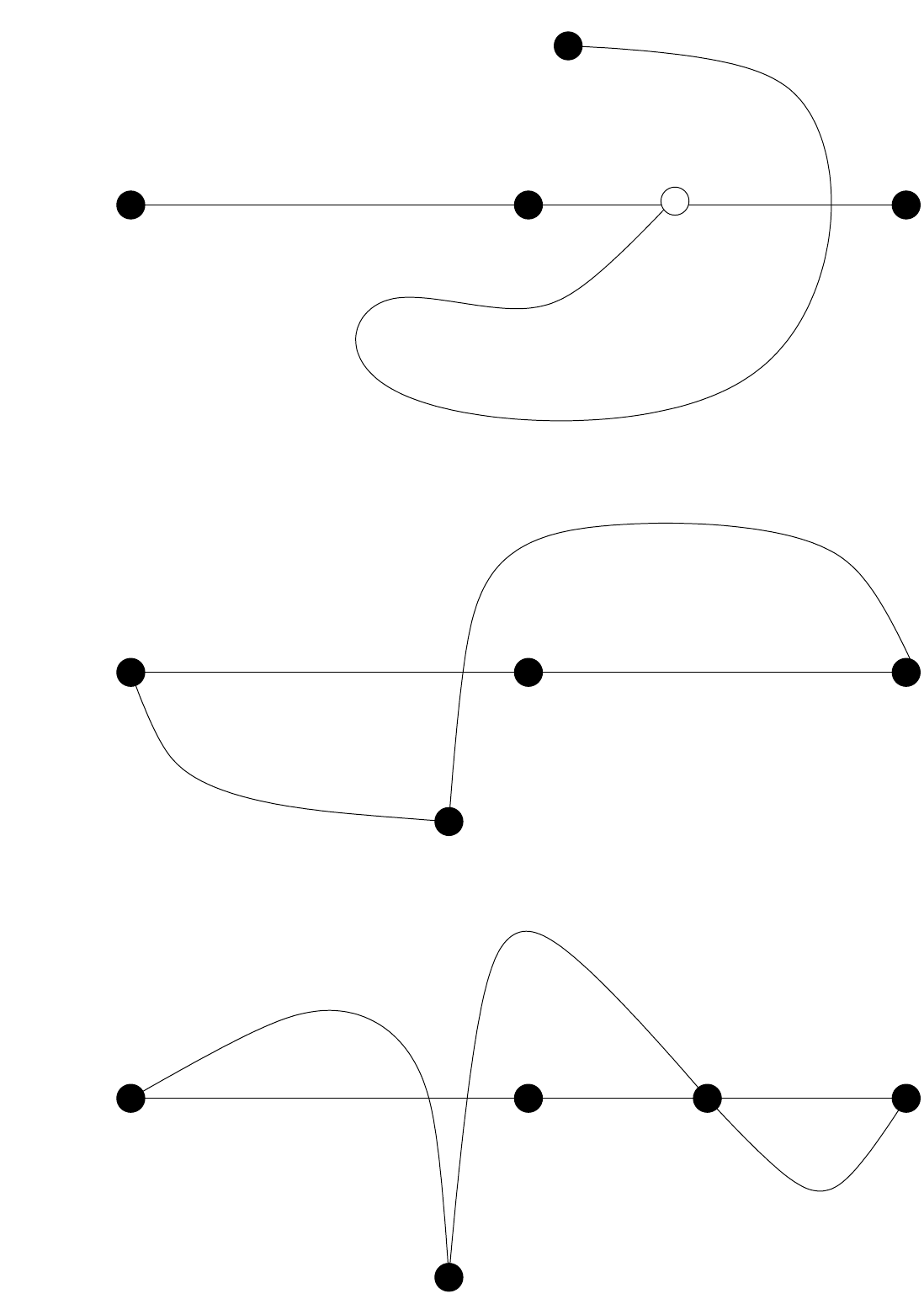_t}}
\caption{\normalsize Example of Lemma \ref{lemmaU_k}. a) Path $S$ connecting $y$ and $y^*$. b) Path $R_1$ and $R_2$ having no intersection with $S$. c) Path $R_1$ having an intersection with $S$ in vertex $u$.}
\end{center}
\end{figure}

Note that for the case $i=0$, we have $y=y^*\in U_l\subseteq U_k$. Thus,  consider cases when $y^*\in U_{l-i}$ where $i\geq 1$, and let $S=yPy^*$.
We will now analyze possibilities of the placement of $y^*$ in relation to $y$ (see Figure \ref{fig2}) and prove the following three claims.
\begin{enumerate}
\item \begin{clm}\label{cl1}
If $w\in C^*$ and $d_{G}(y,w)\leq k$ then $d_G(y^*,w) < d_G(y,w)$.
\end{clm}
\begin{proof}
Let $Q$ be a shortest $y,w$-path of length at most $k$ and assume towards contradiction that $Q$ does not contain $y^*$  (otherwise the condition is trivially true). If $Q\cap S=\{y\}$, then $y\in U_{(l-i)+1}\subseteq U_k$ because the length of $Q\cup S$ is at most $2k$ and both $w,y^*\in U_{l-i}$. 

If $Q\cap S\neq\{y\}$ then there is another vertex $z\in Q\cap S$ with $z\neq y$. If $d_G(z,y)>d_G(z,y^*)$ then $d_G(y^*,w) < d_G(y,w)$. Thus assume  $d_G(z,y)\leq d_G(z,y^*)$ and $zSy^*$ does not contain any vertices of $Q$. By the choice of path $P$ we have that $d_P(z,y)<d_P(y^*,y)$. Then, by definition of $y^*$, we have $z\neq w$, but, as above, since $zSy^* \cap zQw=\{z\}$, we have $z\in U_{(l-i)+1}$, which gives a contradiction with the choice of $y^*$.

Thus for every $w\in C^*$ such that $d_G(y, w)\leq k$ every  shortest $y,w$-path $Q$ in $G$ contains $y^*$, and so $d_{G}(y^*,w)< d_G(y,w)$. 
\end{proof}
\item \begin{clm}\label{cl2}
$N^k[y]\subseteq N^k[y^*]$
\end{clm}
\begin{proof}
Let $x\in N^k[y]$, then we have $d_G(x,y)\leq k$ and we show that $d_G(x, y^*)\leq k$.  Clearly $d_G(y^*,y)\leq k$ and so we may assume that $x\neq y$. In addition, if $x\notin C$ then by the previous argument $d_G(x,y^*)<d_G(x,y)$, because $d_G(w, y^*)<d_G(w,y)$ for every $w\in C^*$. Therefore, we may assume that $x\in C$. 

Let $R_1$ be a $y,x$-path of length at most $k$. Since $x$ is in $C$, there is a $v_C,x$-path $R_2$ of length at most $k$. If $R_1\cap S=\{y\}$ and $R_2\cap S=\emptyset$, then the walk $R_1\cup R_2$ contains a $v_C,y$-path $R$ such that $R\cap S=\{y\}$. The length of $R$ is at most $2k$ and the length of $S$ is at most $k$. Consequently $y\in U_{k}$, which is a contradiction.

In addition, we either have $y^*\in R_2$ or $R_2\cap S=\emptyset$ (again if $y^*\notin R_2$ and there is some $u\in R_2\cap S$ that is closest to $y^*$ on $P$, then $u\in U_{(l-i)+1}$, contradicting the choice of $y^*$). In the former case $d_G(x,y^*)\leq d_{R_2}(x,v_C)\leq k$ and so we may assume that $R_2\cap S=\emptyset$. By the previous discussion, the case that is left to analyze is when $R_1\cap S$ contains some vertex $u\neq y$. If $R_1\cap S$ contains $y^*$, then $d_G(y^*,x)\leq k$. Otherwise selecting $u$ to be closest to $y^*$ gives a $u,x$-path  of length at most $k$ which in connection with the $v_C,x$-path $R_2$ gives a $u,v_C$-walk of length at most $2k$. This walk intersects $S$ in $u$ only. Consequently $u\in U_{(l-i)+1}$ because $y^*\in U_{l-i}$ and $u$ belongs to a $y^*,v_C$-path of length at most $3k$. This again contradicts the choice of $y^*$. 
Therefore, $N^k[y]\subseteq N^k[y^*]$. 
\end{proof}
\item \begin{clm}\label{cl3}
$N^k[y]$ is a proper subset of $N^k[y^*]$.
\end{clm}
\begin{proof}
By Claim \ref{cl2}, $N^k[y]\subseteq N^k[y^*]$. Suppose there exists $w\in C^*$ such that $d_G(y,w)\geq k+1$ and let $Q$ be a shortest $v_C,w$-path in $G$. Note that  the length of $Q$ is at most $k$ and so $d_G(y,w)\leq 2k$ as witnessed by $P\cup Q$.
If there is a $y,w$-path $R$ of length at most $2k$ such that $S\cap R$ contains $z\neq y^*$, and $y^*\notin R$ then choosing $z$ closest to $y^*$ on $S\cap R$ gives $z\in U_{(l-i)+1}$ contradicting the choice of $y^*$.  Thus every such path $R$ contains $y^*$ because otherwise $S\cap R=\{y\}$ implying $y\in U_k$. Considering a $y,w$-path of  the shortest length gives a vertex $z$ such that $d_G(y^*,z)\leq k$  and $d_G(y,z)>k$. Thus $z\in N^k[y^*]\setminus N^k[y]$.
If for every $w\in C^*$ $d_G(y,w)\leq k$ then by Claim \ref{cl1} for every $w\in C^*$ $d_G(y^*,w)< d_G(y,w)\leq k$ and so, by Lemma \ref{simple-lem},  $N^k[y^*]$ is a proper subset of $N^k[y]$.
\end{proof}
\end{enumerate}
Therefore, by Claim \ref{cl3}, $N^k[y]$ is a proper subset of $N^k[y^*]$ and so {\sc DomSet} chooses $y^*\in U_k$. \end{proof}

\begin{lemma}\label{main-const-approx}
For every $t,k\in \mathbb{Z}^+$ there is $\beta_{k,t}$ such that the number of vertices in $C$ selected by {\sc DomSet} is at most $\beta_{k,t}|C^*|.$
\end{lemma}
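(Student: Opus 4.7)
The plan is to combine Lemma \ref{lemmaU_k} with an iterated application of Lemma \ref{main-lem-estimate}. Lemma \ref{lemmaU_k} tells us that every vertex of $C$ chosen by {\sc DomSet} lies in $U_k$, so the whole task reduces to bounding $|U_k|$ in terms of $|C^*|$.

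First I would unwind the definition of the sequence $U_0\subseteq U_1\subseteq \cdots \subseteq U_k$. Since $U_i$ is by definition the set of vertices on $U_{i-1}$-paths of length at most $3k$, Lemma \ref{main-lem-estimate} (applied with $h=3k$ and $Q=U_{i-1}$) yields $|U_i|\leq \alpha_{3k,t}|U_{i-1}|$. A trivial induction on $i$ then gives
\begin{equation*}
|U_k|\leq \alpha_{3k,t}^{\,k}\,|U_0| = \alpha_{3k,t}^{\,k}\,|C^*\cup\{v_C\}|\leq \alpha_{3k,t}^{\,k}\,(|C^*|+1).
\end{equation*}

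Next I would argue that $|C^*|\geq 1$, so that the extra $+1$ can be absorbed. Here I use the standing assumption (made at the start of Section \ref{sec-const}) that every connected component has diameter at least $4k$. If $C$ had no border vertex, then $C$ would be a union of connected components of $G$, and since $C\subseteq N^k[v_C]$ every pair of vertices of $C$ would be within distance $2k$ of each other; this would force the diameter of that component to be at most $2k<4k$, a contradiction. Hence $|C^*|\geq 1$ and therefore $|U_0|\leq 2|C^*|$.

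Setting $\beta_{k,t}:=2\,\alpha_{3k,t}^{\,k}$ completes the proof. The only genuinely non-routine step is the very last one, verifying $|C^*|\geq 1$; this is where the diameter reduction at the beginning of the section is used, and it is worth stating explicitly because otherwise the bound would degenerate when a Voronoi cell coincides with an entire component. Everything else is a mechanical iteration of the structural estimate already proved in Lemma \ref{main-lem-estimate}.
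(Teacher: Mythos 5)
Your proof is correct and follows essentially the same route as the paper's (one-line) argument: the output lies in $U_k$ by Lemma \ref{lemmaU_k}, and iterating Lemma \ref{main-lem-estimate} gives $|U_k|\leq \alpha_{3k,t}^{k}|U_0|$. Your extra step verifying $|C^*|\geq 1$ via the diameter assumption is a detail the paper leaves implicit, and it is a worthwhile addition since $|U_0|=|C^*\cup\{v_C\}|$ genuinely needs it to be absorbed into $\beta_{k,t}|C^*|$.
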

\begin{proof} Since output of {\sc DomSet} is a subset of $U_k$ and in view of Lemma \ref{main-lem-estimate}, $|U_i|\leq \alpha_{3k,t}|U_{i-1}|$.
\end{proof}

\begin{definition}
Let $V^*=\bigcup_{C\in \mathcal{C}} C^*$.
\end{definition}
Using a few relatively easy lemmas, we can conclude the analysis of {\sc DomSet}.
\begin{lemma}\label{lem-simple-W}
Let $C\subseteq V$ be such that $G[C]$ is connected, and such that for some vertex $v_C$, $d_{G[C]}(v_C, w)\leq k$ for every $w\in C$.
Let  $W\subseteq C$ be a set which satisfies $|W|> ks^{k}$.  Then $G[C]$ contains a subdivision of $K_{1,s}$ with all leaf vertices in $W$. 
\end{lemma}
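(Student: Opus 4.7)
The plan is to prove the contrapositive: assume no subdivision of $K_{1,s}$ with all leaves in $W$ exists in $G[C]$, and derive $|W|\leq ks^k$. The natural tool is a BFS tree.

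First I would construct a BFS tree $T$ of $G[C]$ rooted at $v_C$. Since $d_{G[C]}(v_C,w)\leq k$ for every $w\in C$, the tree $T$ has depth at most $k$. Next, I would introduce the auxiliary set $W'\subseteq C$ consisting of all vertices $u$ such that the subtree $T_u$ of $T$ rooted at $u$ contains at least one vertex of $W$. Obviously $W\subseteq W'$, and $W'$ is ancestor-closed in $T$, so it induces a rooted subtree of $T$ of depth at most $k$ with root $v_C$.

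The key structural step is the following claim: every $u\in C$ has at most $s-1$ children (in $T$) that belong to $W'$. Indeed, suppose $u$ had $s$ children $c_1,\dots,c_s$ all in $W'$. By definition of $W'$, each subtree $T_{c_i}$ contains some $w_i\in W$. The tree paths from $u$ to $w_1,\dots,w_s$ pass through pairwise distinct children $c_i$, and hence are internally vertex-disjoint paths in $G[C]$. Together with $u$ they form a subdivision of $K_{1,s}$ with center $u$ and leaves $w_1,\dots,w_s\in W$, contradicting the standing assumption. (Note that each $w_i\neq u$ because $w_i$ lies below a proper child of $u$.) Hence the branching factor of the tree $W'$ is at most $s-1$.

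A rooted tree of depth at most $k$ with branching factor at most $s-1$ has at most $\sum_{i=0}^{k}(s-1)^i$ vertices, and a one-line induction on $k$ shows that this sum is bounded by $ks^k$ for every $s\geq 2$ and $k\geq 1$: the base case $k=1$ gives $1+(s-1)=s=1\cdot s^1$, and the inductive step uses $(s-1)^{k+1}+ks^k\leq s^{k+1}+ks^k=(1+k/s)s^{k+1}\leq (k+1)s^{k+1}$. Therefore $|W|\leq |W'|\leq ks^k$, contradicting the hypothesis $|W|>ks^k$. The case $s=1$ is trivial: a $K_{1,1}$-subdivision is merely a path from the center to a leaf, so any $w\in W\setminus\{v_C\}$ together with the (at most length-$k$) $v_C,w$-path in $T$ suffices, and $|W|>k$ easily produces such a $w$.

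The argument has no real obstacle; the only mildly delicate point is matching the bound $ks^k$ stated in the lemma, which is handled by the short induction above and is in fact tight at $k=1$.
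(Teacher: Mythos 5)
Your proof is correct and follows essentially the same route as the paper: both arguments take a BFS tree rooted at $v_C$ of depth at most $k$ and show that if no vertex branches toward $s$ disjoint subtrees each containing a $W$-vertex (which would yield the $K_{1,s}$-subdivision with leaves in $W$), then the relevant subtree, and hence $W$, has size at most $ks^k$. The only difference is bookkeeping: the paper pigeonholes $W$ onto a single BFS level and bounds the degree in the union of root-paths to that level, whereas you bound the branching of the full ancestor-closure of $W$ directly, which if anything is slightly cleaner since counting children (rather than degree) guarantees all $s$ paths really end in $W$.
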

\begin{proof}  Let $T$ denote a spanning BFS tree in $G[C]$ rooted at $v_C$ and let $W_i$ denote the set of vertices in $W$ that are at distance $i$ from $v_C$ in $T$. We have $\sum_{i=0}^k|W_i| =|W|$, and so there is an $i$ such that $|W_i|\geq s^{k}$. For $w\in W_i$, let $P_w$ denote the path $wTv_C$ (path from $w$ to $v_C$ using the edges of the spanning tree $T$), and let $T'$ be the union $\bigcup_{w\in W_i} P_w$. Then $T'$ is a tree with leaves in $W_i$ and for every $w\in W_i$, $d_{T'}(w,v_C)\leq k$. If there is a vertex $z\in T'$ such that $deg_{T'}(z)\geq s$, then $T'$, and so $G[C]$, contains a subdivision of $K_{1,s}$ with all leaf vertices in $W$. Otherwise, the number of vertices in $W_i$ is less
than $s^{k}$. \end{proof}
\begin{lemma}\label{lem-two-clusters-edges}
Let $C,C'$ be two Voronoi cells as in Lemma \ref{lem-simple-W}. Then the number of edges between $C$ and $C'$ is at most $k^2t^{2kt^k}$.
\end{lemma}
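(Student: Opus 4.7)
My plan is to argue by contradiction: assuming the number of edges between $C$ and $C'$ exceeds $k^{2}t^{2kt^{k}}$, I will exhibit a subdivision of $K_{2,t}$ in $G$, contradicting the $K_{2,t}$-minor-free hypothesis. The main tool is Lemma \ref{lem-simple-W}, which produces a $TK_{1,s}$ inside a cell from any sufficiently large subset of the cell. (Voronoi cells satisfy the hypothesis of that lemma because a shortest $G$-path from the Voronoi center $u_{i}$ to any vertex of $C_{i}$ is forced to stay inside $C_{i}$, so $d_{G[C_{i}]}(u_{i},\cdot)=d_{G}(u_{i},\cdot)\le k$.) The strategy is to splice together a star inside $C$ and a star inside $C'$ through crossing edges to obtain a $TK_{2,t}$.

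First I would reduce to a bounded-degree regime. If some $x\in C$ had more than $kt^{k}$ neighbors in $C'$, applying Lemma \ref{lem-simple-W} inside $G[C']$ to those neighbors with $s=t$ gives a vertex $z'\in C'$ joined to $t$ of them by internally disjoint paths inside $C'$; together with the edges from $x$ this already realizes a $TK_{2,t}$ with poles $x$ and $z'$, internal disjointness being automatic since the $z'$-paths live in $C'$ and $x\in C$. Symmetrically no $y\in C'$ has more than $kt^{k}$ neighbors in $C$. Thus $m\le |W|\cdot kt^{k}$, where $W$ denotes the set of vertices of $C$ with a neighbor in $C'$, and the assumption $m>k^{2}t^{2kt^{k}}$ forces $|W|>kt^{2kt^{k}-k}$, an enormous value.

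Next I would apply Lemma \ref{lem-simple-W} inside $G[C]$ to $W$ with a parameter $s$ chosen so that $|W|>ks^{k}$ and $s\ge kt^{k}(t+1)+1$; the bound on $|W|$ is so generous that any $s$ up to roughly $t^{t^{k}}$ will do. This yields a $TK_{1,s}$ in $C$ rooted at some $z\in C$ with leaf set $X\subseteq W$ of size $s$. I then analyze the bipartite subgraph formed between $X$ and $C'$ by the crossing edges. By K\"onig's theorem, either (i) it admits a matching of size $kt^{k}+1$, or (ii) it has a vertex cover of size at most $kt^{k}$. In case (ii), at most $kt^{k}$ elements of $X$ sit on the $X$-side of the cover, so the remaining at least $|X|-kt^{k}$ vertices of $X$ have their neighborhoods contained in a set of at most $kt^{k}$ vertices of $C'$; pigeonhole then produces a $y^{*}\in C'$ adjacent to at least $t$ leaves of $X$. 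Since any $t$ leaves of the $K_{1,s}$ already form a $TK_{1,t}$ rooted at $z$, combining them with the edges to $y^{*}$ yields a $TK_{2,t}$ with poles $y^{*}$ and $z$.

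In the remaining case (i), the matched $y$-vertices form a subset of $C'$ of size greater than $kt^{k}$, so a final application of Lemma \ref{lem-simple-W} inside $G[C']$ with $s=t$ produces a $TK_{1,t}$ in $C'$ centered at some $z'$ with leaves $y_{j_{1}},\dots,y_{j_{t}}$ chosen from those matched vertices. For each $\ell$, concatenating the $z$-to-$x_{j_{\ell}}$ branch inside $C$, the matching edge $x_{j_{\ell}}y_{j_{\ell}}$, and the $y_{j_{\ell}}$-to-$z'$ branch inside $C'$ gives a path, and the $t$ resulting paths are pairwise internally disjoint because the $C$-portions come from internally disjoint branches of one star, the $C'$-portions from another, and $C\cap C'=\emptyset$. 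This assembles the forbidden $TK_{2,t}$ with poles $z,z'$ and branch vertices $y_{j_{1}},\dots,y_{j_{t}}$. The only real obstacle I expect is bookkeeping: verifying that Voronoi cells satisfy the $d_{G[C]}$-hypothesis of Lemma \ref{lem-simple-W}, carefully tracking internal disjointness of the constructed paths, and lining up the elementary inequalities between $k$, $t$, $s$ so that Lemma \ref{lem-simple-W} fires with the right parameters at each invocation.
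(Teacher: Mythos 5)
Your proof is correct and takes essentially the same route as the paper: reduce to cross-degree at most $kt^k$, then use a sufficiently large matching between the cells to apply Lemma~\ref{lem-simple-W} on each side and splice the two star subdivisions through the matching edges into a forbidden $TK_{2,t}$. The only difference is organizational --- you build the star in $C$ first and run K\"onig's theorem on its leaves (adding a vertex-cover/pigeonhole case), whereas the paper bounds the matching number of the whole crossing bipartite graph directly and finishes with the bound (number of edges) $\le$ (matching size)$\times$(max degree) --- but the key lemma, the parameters, and the final construction coincide.
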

\begin{proof} If there is a vertex $z\in C$ which has more than $kt^{k}$ neighbors in $C'$, then, by Lemma \ref{lem-simple-W} applied to $W=N(z)\cap C'$, $G$ contains a subdivision of $K_{2,t}$. If there is a matching $Q$ between $C$ and $C'$ of size larger than $kt^{kt^k+1}$, then we can apply Lemma \ref{lem-simple-W} twice. Apply it first with $s=kt^k+1$  and $W= V(Q)\cap C'$ to get a subdivision of $K_{1,s}$, $T$, in $C'$. Then apply it again with $s=t$  and $W\subseteq V(Q)\cap C$ which contains vertices matched by $Q$ with the leaves of $T$ to get a subdivision of $K_{1,t}$ in $C$. Therefore the number of edges between $C$ and $C'$ is at most $k^2t^{2kt^k}.$ \end{proof}

Finally, we have with the following observation (we recall that $M$ is an optimal distance-$k$ dominating set in $G$).

\begin{lemma}\label{bound-lemma}
$|V^*|\leq  k^2t^{2kt^k}(t+1)|M|.$
\end{lemma}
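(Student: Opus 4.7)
The plan is to build an auxiliary graph on the centers $M$ and exploit sparsity of $K_{2,t}$-minor-free graphs, then multiply by the per-pair edge bound from Lemma~\ref{lem-two-clusters-edges}. Let me lay out the argument in the order I would carry it out.

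First, I form the \emph{quotient graph} $H$ whose vertex set is $M=\{u_1,\dots,u_m\}$ and in which $u_iu_j$ is an edge exactly when there is at least one edge of $G$ between the Voronoi cells $C_i$ and $C_j$. Since each $G[C_i]$ is connected, $H$ is obtained from $G$ by contracting the connected subgraphs $G[C_i]$ (and discarding loops and parallel edges), so $H$ is a minor of $G$ and therefore has no $K_{2,t}$-minor. Applying Lemma~\ref{crs-lem} to $H$ gives
\[
|E(H)| \leq \tfrac{1}{2}(t+1)(|M|-1) \leq \tfrac{1}{2}(t+1)|M|.
\]

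Next, I translate $|E(H)|$ into a bound on cross-cluster edges in $G$. Each edge of $H$ corresponds to a pair $(C,C')$ of distinct Voronoi cells with at least one $G$-edge between them; by Lemma~\ref{lem-two-clusters-edges}, each such pair contributes at most $k^2 t^{2kt^k}$ edges of $G$. Hence the total number of edges of $G$ with endpoints in two different cells is at most
\[
k^2 t^{2kt^k} \cdot |E(H)| \leq \tfrac{1}{2}\,k^2 t^{2kt^k}(t+1)|M|.
\]

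Finally, I relate this edge count to $|V^*|$. By definition, every border vertex $v\in V^*$ has at least one neighbor in $V\setminus C$ (where $C$ is the cell containing $v$), so $v$ is an endpoint of some cross-cluster edge. Since each cross-cluster edge has exactly two endpoints, both lying in $V^*$, we get
\[
|V^* | \leq 2\cdot \tfrac{1}{2}\, k^2 t^{2kt^k}(t+1)|M| = k^2 t^{2kt^k}(t+1)|M|,
\]
which is the desired bound. The only step that required any thought is verifying that the contraction of the Voronoi cells yields a legitimate minor — which is immediate since each $G[C_i]$ is connected — and making sure Lemma~\ref{lem-two-clusters-edges} applies to every pair $(C,C')$; the hypotheses on $v_C$ that feed Lemma~\ref{lem-simple-W} are built into the construction of the cells, so no further work is needed.
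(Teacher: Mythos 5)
Your proposal is correct and follows essentially the same route as the paper: contract the Voronoi cells to get a $K_{2,t}$-minor-free minor on $M$, bound its edge count via Lemma~\ref{crs-lem}, multiply by the per-pair bound of Lemma~\ref{lem-two-clusters-edges}, and observe that every border vertex lies on a cross-cell edge. The only difference is that you spell out the factor-of-two bookkeeping explicitly, which the paper leaves implicit.
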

\begin{proof} Contracting each $C\in \mathcal{C}$ to a vertex gives a minor of $G$ which by Lemma \ref{crs-lem} has at most $(t+1)|M|/2$ edges. By Lemma \ref{lem-two-clusters-edges}, the number of edges with endpoints in two different Voronoi cells is at most $k^2t^{2kt^k}(t+1)|M|/2$. Consequently, the  number of vertices that belong to these edges is at most $k^2t^{2kt^k}(t+1)|M|$. \end{proof}

We will now combine the previous facts to prove the main result of this section.

\begin{theorem}\label{const-approx}
Let $t, k\in Z^+$. Then there exists $\delta=\delta(t,k)$ such that given a connected graph $G$ with no $K_{2,t}$-minor and such that $diam(G)\geq 4k$, algorithm {\sc DomSet} finds in $O(k)$ rounds a distance-$k$ dominating set $D$ in $G$ such that $|D|\leq \delta \cdot \gamma_k(G).$   
\end{theorem}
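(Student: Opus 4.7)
The correctness and running time are immediate: every vertex $v$ adds some $w_v\in N^k[v]$ to $D$, so $D$ is a distance-$k$ dominating set, and computing $N^k[v]$ (and thus $q_v$) at every vertex takes $O(k)$ rounds, after which each $v$ needs to learn $q_w$ for every $w\in N^k[v]$, which is again $O(k)$ rounds. So the real content is the approximation guarantee, and the plan is to assemble the lemmas already developed.

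First I would fix an optimal distance-$k$ dominating set $M$ in $G$ and form the Voronoi partition $\mathcal{C}=\{C_1,\dots,C_m\}$ centered at $M$ exactly as defined before Lemma \ref{simple-lem}. By construction each $G[C]$ is connected, and each $C$ admits the center $v_C$ with $d_G(v_C,w)\le k$ for every $w\in C$, so the cells satisfy the hypotheses needed for Lemmas \ref{lemmaU_k}--\ref{lem-two-clusters-edges}. Since $D$ is the union over all vertices $v\in V$ of the chosen $w_v$, and each $w_v$ lies in $N^k[v]$ and hence in a (possibly neighboring) cell, I can partition the contribution to $D$ cell by cell:
\[
|D|\;\le\;\sum_{C\in\mathcal{C}} |D\cap C|.
\]

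Next, I would apply Lemma \ref{main-const-approx} to each cell to get $|D\cap C|\le \beta_{k,t}\,|C^*|$, hence
\[
|D|\;\le\;\beta_{k,t}\sum_{C\in\mathcal{C}}|C^*|\;=\;\beta_{k,t}\,|V^*|.
\]
Then Lemma \ref{bound-lemma} gives $|V^*|\le k^2 t^{2kt^k}(t+1)|M|$, so setting
\[
\delta(t,k)\;:=\;\beta_{k,t}\cdot k^2 t^{2kt^k}(t+1)
\]
yields $|D|\le \delta\cdot |M|=\delta\cdot\gamma_k(G)$, completing the proof.

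The only mild subtlety, and the single step worth double-checking, is the bookkeeping in the first inequality: each selected vertex must be charged to some cell, so one should observe that if $w_v$ is chosen by $v$, then $w_v\in C$ for the unique cell $C$ containing $w_v$, and Lemma \ref{main-const-approx} bounds precisely the number of such vertices lying in $C$ regardless of which $v$'s selected them. Everything else is a direct chaining of already-proved lemmas, so no serious obstacle remains beyond this routine accounting; all the combinatorial work has been carried out in Lemmas \ref{lemmaU_k}, \ref{main-const-approx} and \ref{bound-lemma}.
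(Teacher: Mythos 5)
Your proposal is correct and follows essentially the same route as the paper: the approximation bound is obtained by chaining Lemma \ref{main-const-approx} (each cell contributes at most $\beta_{k,t}|C^*|$ selected vertices) with Lemma \ref{bound-lemma} ($|V^*|\leq k^2t^{2kt^k}(t+1)|M|$), yielding exactly the same constant $\delta=\beta_{k,t}k^2t^{2kt^k}(t+1)$. The per-cell bookkeeping you flag as the one subtle step is precisely what Lemma \ref{main-const-approx} is stated to handle, so there is no gap.
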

\begin{proof}
Let $M$ be an optimal distance-$k$ dominating set in $G$.
By Lemma \ref{main-const-approx}, the set $D$ obtained by {\sc DomSet} satisfies $|D|\leq \beta_{k,t}|V^*|$ for some $\beta_{k,t}$. In view of Lemma \ref{bound-lemma}, $|V^*|\leq k^2t^{2kt^k}(t+1)|M|$ and so, $|D|\leq  \delta |M|$ for $\delta =\beta_{k,t} k^2t^{2kt^k}(t+1)$. \end{proof}

\section{The $(1+\epsilon)$-factor approximation}\label{sec-eps}
In this section, we will first give a distributed $(1+\epsilon)$-factor approximation of an optimal $k$-MDS in the case when $G$ is $K_{2,t}$-minor-free. This algorithm runs in $O(\log^*{|V|})$ rounds in the {\it Local} model.
As noted in the introduction, adapting methods from \cite{CHW} is not automatic. However, there are some instances when this can be accomplished with relatively little effort.  In the second part of this section, we give one example of such a situation when a graph $G$ is $K_t$-minor-free and satisfies  $\gamma_k(G)\leq C \gamma_1(G)$ for some constant $C$. For example, $K_t$-minor-free graphs of a bounded maximum degree satisfy this condition.
\subsection{Graphs with no $K_{2,t}$-minor}

Let $H=(W,F)$ be a graph and let $P=(W_1, \dots, W_l)$ be an ordered partition of $W$. We define $\partial(P)$ to be the set of vertices $v\in W$ such that $v \in W_i$ and $N(v)\cap W_j\neq \emptyset$ for some $i\neq j$.
We have the following theorem which can be proved by applying methods from \cite{CHW} and is a special case of the corresponding theorem in \cite{CHWW1}.
\begin{theorem}\label{waw-thm} Let $s\in \mathbb{Z}^+$ and let $\epsilon>0$. There exists $L$ such that the following holds. Let $H=(W,E)$ be a graph on $n$ vertices with no $K_{s}$-minor. There is a distributed algorithm which finds a partition $P=(W_1, \dots, W_l)$ such that: 
\begin{itemize}
    \item For every $i$, $H[W_i]$ has diameter $O(L)$ and
    \item $|\partial(P)|\leq \epsilon |W|$.
    \end{itemize}
The algorithm runs in $L\log^*{n}$ rounds.
\end{theorem}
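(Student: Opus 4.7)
The plan is to replicate, in the $K_s$-minor-free setting, the iterative low-diameter clustering scheme from \cite{CHW}. The driving observation is a sparsity bound: by Mader's theorem every $K_s$-minor-free graph has arboricity bounded by some constant $c=c(s)$, and, more usefully, after contracting any family of vertex-disjoint connected subgraphs to single vertices, the resulting minor is again $K_s$-minor-free and therefore has at most $c(s)|W|$ edges. This is the only structural fact about $H$ that the proof will use.

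First, I would build the partition in rounds. In each round the algorithm selects a set of ``seeds'' that are pairwise at distance more than $2r$ in the subgraph of unclustered vertices, where $r=r(\epsilon,s)$ is a constant radius to be fixed at the end. Seeds are chosen by running Linial-style deterministic MIS on a power of the current graph of unclustered vertices; because arboricity is bounded by $c(s)$, after a simple peeling (removing vertices of degree larger than some constant $\Delta=\Delta(s)$ and placing them into a ``leftover'' pile) we obtain a subgraph of maximum degree $O(1)$ on which MIS runs in $O(\log^* n)$ rounds. Around each seed $v$ we then grow a BFS ball $B(v,r)$ and define a cluster $W_v$ as the vertices it captures. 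Vertices not captured in the current round are either passed to the next round or, after a fixed number of rounds (depending on $\epsilon,s$), are individually assigned to whichever neighboring cluster they are closest to. Each $H[W_i]$ then has diameter $O(r)$.

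To bound $|\partial(P)|$, I would argue as follows. Contract each $W_i$ to a single vertex; by minor-closedness this yields a graph with at most $c(s)\cdot l$ edges, where $l$ is the number of clusters. Only vertices incident to inter-cluster edges contribute to $\partial(P)$, and each such edge contributes exactly two vertices, so $|\partial(P)|\le 2c(s)\,l$. A standard ball-growing/radius-averaging argument (à la Miller--Peng--Xu, derandomized via the seed procedure above) guarantees that a $(1-\epsilon')$-fraction of vertices land in clusters of size at least $r$; choosing $r\ge 2c(s)/\epsilon'$ and $\epsilon'=\epsilon/2$ yields $l\le |W|/r$ and hence $|\partial(P)|\le \epsilon|W|$. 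The leftover ``peeling'' vertices and any ill-formed clusters can be absorbed into $\partial(P)$ at an additional cost of at most $\epsilon|W|/2$, absorbed by adjusting $\Delta$ and $r$.

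The principal obstacle is the usual one in this type of argument: ensuring that the ball-growing step actually produces clusters that are large enough to amortize their boundary against their interior, rather than producing many near-empty clusters whose boundary matches their size. In the minor-free setting this is not automatic, since there is no direct analogue of the planar face-structure argument used in simpler cases. The way around it, and the step that requires the most care, is the padded/low-diameter decomposition step: choosing the radius $r$ and the number of iterations as functions of $\epsilon$ and $s$ such that in each iteration a constant fraction of the remaining vertices gets safely interior to a cluster, so that after $O(1)$ iterations the uncovered vertices number at most $\epsilon|W|$. Combined with the deterministic MIS step that costs $O(\log^* n)$ rounds per iteration, the whole algorithm runs in $L\log^* n$ rounds for some $L=L(\epsilon,s)$, as claimed.
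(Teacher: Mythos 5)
The paper does not prove Theorem \ref{waw-thm} itself; it imports it as a special case of a clustering result from \cite{CHWW1}, provable by the methods of \cite{CHW}. Those methods are not a ball-growing decomposition: they start from singleton clusters and repeatedly contract a heavy star forest in the (sparse, because minor-closed) cluster graph, so that the number of inter-cluster edges of $H$ drops by a constant factor in each of $O(\log(1/\epsilon))$ phases, each phase costing $O(\log^*n)$ rounds for symmetry breaking. Your proposal takes a genuinely different route, and it has a gap at its central step.

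The gap is in the boundary bound. You contract each cluster $W_i$ to a vertex, observe that the resulting minor is $K_s$-minor-free and hence has at most $c(s)\,l$ edges, and conclude $|\partial(P)|\leq 2c(s)\,l$. This does not follow: contraction collapses parallel edges, so the minor having $c(s)\,l$ edges says nothing about how many edges of $H$ run between two fixed clusters, and $\partial(P)$ counts vertices of $H$ incident to such edges. Concretely, $K_{2,m}$ is planar (hence $K_5$-minor-free for every $m$); taking $W_1=\{u_1,x_1,\dots,x_m\}$ (a star, diameter $2$) and $W_2=\{u_2\}$ gives $l=2$ and $|\partial(P)|=m+1$. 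The phenomenon that two low-diameter clusters can be joined by unboundedly many edges is exactly why the present paper needs the stronger $K_{2,t}$-minor-free hypothesis and Lemma \ref{two-cluster-lem} elsewhere; under the mere $K_s$-minor-free hypothesis of Theorem \ref{waw-thm} you cannot bound $|\partial(P)|$ by the number of clusters, and must control the cut edges of $H$ directly. A correct region-growing argument does bound the cut edges (by $\epsilon\,|E(H)|\leq \epsilon\,c(s)|W|$ via radius averaging), but that is precisely the step you leave as an assertion: a deterministic, $O(\log^*n)$-round implementation of Miller--Peng--Xu-style padded decomposition is not standard (deterministic LOCAL low-diameter decompositions in general cost polylogarithmic time), whereas the heavy-star-contraction argument of \cite{CHW} sidesteps it entirely. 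As written, the proposal both proves the wrong inequality and defers the one genuinely hard ingredient to an unproved derandomization claim.
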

We will use the algorithm from  Theorem \ref{waw-thm} to improve the approximation ratio of the algorithm from the previous section. Although the general idea is the same as in \cite{CHW}, there are a few changes in the analysis that must be made to account for the fact that we are dealing with a distance-$k$ dominating set with $k\geq 2$. In particular, the assumption that there is no $K_{2,t}$-minor (rather than a more relaxed assumption that there is no $K_{s}$-minor for $s\geq t+2$) will play a critical role in the analysis via Lemma \ref{two-cluster-lem}.
 
Let $\alpha \in (0,1)$ be given and let $D$ be the set obtained by {\sc DomSet}. Then, by Theorem \ref{const-approx}, \begin{equation}\label{eq-C}|D|\leq \delta \cdot \gamma_k(G)\end{equation} for some $\delta$ that depends on $t,k$ only.
Consider Voronoi cells with centers in vertices from $D$, that is, Voronoi cells $C_v$ for $v\in D$ with $w$ joining $C_v$ if $d_G(v,w)$ is minimum over all $v\in D$ and ties resolved by selecting $v$ with maximum $ID(v).$ Let $H=(W,F)$ be obtained from $G$ by contracting each $C_v$ to a vertex. 
Set $\epsilon :=\frac{\alpha}{2\delta k^2t^{2kt^k}}$ and let $P = (W_1, \dots, W_l)$ be the partition of $W$ from Theorem \ref{waw-thm}. We have \begin{equation}\label{partial1}|\partial(P)|\leq \epsilon |W| =\epsilon |D|.\end{equation}
Partition $P$ yields partition $P'=(V_1, \dots, V_l)$ of $V(G)$ by setting $V_i:=\bigcup_{u\in W_i} C_u$.
\begin{lemma}\label{two-cluster-lem}
Let $u,w\in V(H)$. Then the number of edges in $G$ between $C_u$ and $C_v$ satisfies $|E_{G}(C_u, C_v)|\leq k^2 t^{2kt^k}.$
\end{lemma}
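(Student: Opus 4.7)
The plan is to reduce this directly to Lemma \ref{lem-two-clusters-edges}, which was established in the previous section under exactly the structural hypotheses that the Voronoi cells $\{C_v : v\in D\}$ considered here satisfy. The only real work is to verify those hypotheses; after that, the bound $k^2t^{2kt^k}$ transfers verbatim.

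First I would check that each cluster $G[C_v]$ is connected and that the center $v\in D$ plays the role of the vertex $v_C$ demanded by Lemma \ref{lem-simple-W}, i.e. $d_{G[C_v]}(v,w)\leq k$ for every $w\in C_v$. Since $D$ is a distance-$k$ dominating set (being the output of {\sc DomSet}), every vertex $w\in V$ has a center in $D$ within distance at most $k$, and hence $d_G(v,w)\leq k$ whenever $w\in C_v$. What remains is the standard Voronoi fact that a shortest $v,w$-path stays inside $C_v$: if $u$ is an internal vertex of such a path $P$, then $d_G(v,u)+d_G(u,w)=d_G(v,w)$, and for any other center $v'\in D$ the triangle inequality gives $d_G(v',u)\geq d_G(v',w)-d_G(u,w)\geq d_G(v,w)-d_G(u,w)=d_G(v,u)$, which together with the ID-based tie-breaking rule places $u$ in $C_v$. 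This simultaneously proves connectedness of $G[C_v]$ and the required distance bound inside the induced subgraph.

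With these properties verified, both $C_u$ and $C_v$ satisfy the hypotheses of Lemma \ref{lem-simple-W}. Lemma \ref{lem-two-clusters-edges} therefore applies and yields $|E_G(C_u,C_v)|\leq k^2 t^{2kt^k}$, which is exactly the desired bound.

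I do not expect a serious obstacle here: the content is just that Voronoi cells around any distance-$k$ dominating set are geodesically convex in the appropriate sense, and then the $K_{2,t}$-minor-free consequence proved earlier via two applications of Lemma \ref{lem-simple-W} kicks in. The only delicate point is the tie-breaking in the Voronoi assignment, but the maximum-$ID$ rule used here is the same one used in the earlier section, so the argument transfers without change.
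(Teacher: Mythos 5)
Your proposal is correct and follows the same route as the paper: the paper's proof simply observes that every vertex of $C_u$ is within distance $k$ of its center $u$ and then invokes Lemma \ref{lem-two-clusters-edges} verbatim. The extra verification you supply (connectedness of the cells and the distance bound inside $G[C_v]$) is the routine Voronoi argument the paper takes for granted; just note that in the tie case $d_G(v',u)=d_G(v,u)$ with $ID(v')>ID(v)$ you should push the equality chain back to $w$ to contradict $w\in C_v$, rather than concluding directly that $u\in C_v$.
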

\begin{proof} By construction every vertex $w\in C_u$ is within distance $k$ of $u$. Consequently, by Lemma \ref{lem-two-clusters-edges}, $|E_G(C_u,C_v)|\leq k^2t^{2kt^k}.$ \end{proof}

Now combining Lemma \ref{two-cluster-lem} and (\ref{partial1}) we have that $\epsilon |D|$ Voronoi cells can have $k^2t^{2kt^k}$ edges between them, each connecting a pair of vertices. Thus
\begin{equation}\label{partial2}
|\partial(P')|\leq 2\epsilon k^2t^{2kt^k}|D|.
\end{equation}
Informally speaking, we use Algorithm 1 to find a seed dominating set. We define Voronoi cells and construct groups of Voronoi cells using Theorem \ref{waw-thm}, and solve the subgraphs inside these groups optimally. Specifically, we consider the following procedure.\\
\begin{algorithm}[H]
 \KwData{Graph $G$ with no $K_{2,t}$-minor, $k\in Z^+$, $0<\alpha<1$}
 \KwResult{Set $Q$}
\caption{{\sc $k$-DomSet Approximation}}
\begin{enumerate}
    \item Find $D$ using {\sc DomSet}.
    \item Construct graph $H$ as above and set $\epsilon:=\frac{\alpha}{2\delta k^2t^{2kt^k}}$.
    \item Use the algorithm from Theorem \ref{waw-thm} to find $P$. Let $Q:= \partial(P')$.
    \item For every $i=1, \dots, l$ find a set $Q_i$ in $G[V_i]$ such that $|Q_i|$ is the smallest and $Q_i \cup (\partial(P')\cap V_i)$ distance-$k$ dominates $V_i$ in $G$.
    \item Return $Q:=Q\cup \bigcup_i Q_i.$
\end{enumerate}
\end{algorithm}
Using the above discussion we can now prove the main theorem.
\begin{theorem}\label{main-approx-thm}
Let $\alpha\in (0,1)$ and let $t,k\in Z^+$. Given a connected graph $G=(V,E)$ with no $K_{2,t}$-minor of diameter at least $4k$, procedure {\sc $k$-DomSet Approximation} finds in $O(\log^*{|V|})$ rounds set $Q$ such that $|Q|\leq (1+\alpha)\gamma_k(G).$   
\end{theorem}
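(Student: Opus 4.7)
The plan is to verify that Algorithm 2 runs in $O(\log^*|V|)$ rounds and that the returned set $Q$ satisfies $|Q|\leq (1+\alpha)\gamma_k(G)$. The running time is straightforward to assemble: Step 1 costs $O(k)$ rounds by Theorem \ref{const-approx}, Step 3 costs $L\log^*|V|$ rounds by Theorem \ref{waw-thm}, and Step 4 can be executed locally because every Voronoi cell $C_u$ has radius at most $k$ around $u$ in $G$ and $H[W_i]$ has diameter $O(L)$, so that $V_i\subseteq N^{O(Lk)}[v]$ for every $v\in V_i$. In $O(Lk)$ rounds each vertex of $V_i$ therefore collects the induced subgraph $G[V_i]$ together with $\partial(P')\cap V_i$ and computes $Q_i$ internally. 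The total round complexity is $O(\log^*|V|)$.

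For the approximation ratio I would split $Q$ into its two contributions and bound each separately. For the boundary piece, inequality (\ref{partial2}) and Theorem \ref{const-approx} do all the work: plugging in $\epsilon=\alpha/(2\delta k^2t^{2kt^k})$ and $|D|\leq \delta\gamma_k(G)$ gives
\begin{equation*}
|\partial(P')|\leq 2\epsilon k^2t^{2kt^k}|D|\leq 2\epsilon k^2t^{2kt^k}\delta\,\gamma_k(G)=\alpha\,\gamma_k(G).
\end{equation*}
This is the step where the $K_{2,t}$-minor-free hypothesis is really used, via Lemma \ref{two-cluster-lem}, in order to pass from the $H$-boundary $\partial(P)$ to the $G$-boundary $\partial(P')$ without losing too much.

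Next I would show that $\sum_i|Q_i|\leq\gamma_k(G)$. Let $M$ be an optimal distance-$k$ dominating set of $G$ and set $M_i:=M\cap V_i$. The key observation is that $M_i\cup(\partial(P')\cap V_i)$ already distance-$k$ dominates $V_i$ in $G$. Indeed, for $v\in V_i$ choose $u\in M$ with $d_G(v,u)\leq k$: if $u\in V_i$ we are done, otherwise take a shortest $v,u$-path $v=x_0,x_1,\dots,x_l=u$ in $G$ and let $w=x_j$ be the last vertex on this path that lies in $V_i$. Then $x_{j+1}\in V_{j'}$ for some $j'\neq i$, so $w\in\partial(P')\cap V_i$, and $d_G(v,w)\leq j\leq l-1<k$. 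Since $Q_i$ was chosen of minimum cardinality subject to $Q_i\cup(\partial(P')\cap V_i)$ distance-$k$ dominating $V_i$, we obtain $|Q_i|\leq|M_i|$, and summing over $i$ yields $\sum_i|Q_i|\leq|M|=\gamma_k(G)$.

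The main obstacle is the boundary bound, because a priori partitioning $H$ only controls the $H$-boundary, and lifting to $V(G)$ could in principle blow the boundary up by a factor proportional to the number of edges between any two Voronoi cells; this is exactly what Lemma \ref{two-cluster-lem} prevents, and it is the reason one needs the full $K_{2,t}$-minor-free assumption rather than the weaker $K_s$-minor-free one. Once both contributions are controlled, $|Q|\leq|\partial(P')|+\sum_i|Q_i|\leq\alpha\,\gamma_k(G)+\gamma_k(G)=(1+\alpha)\,\gamma_k(G)$, and the theorem follows.
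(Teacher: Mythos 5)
Your proposal is correct and follows essentially the same route as the paper: the same split of $|Q|$ into $|\partial(P')|$ plus $\sum_i|Q_i|$, the same use of (\ref{partial2}), (\ref{eq-C}) and the choice of $\epsilon$ to bound the boundary term by $\alpha\gamma_k(G)$, and the same key observation that $M_i\cup(\partial(P')\cap V_i)$ distance-$k$ dominates $V_i$ (the paper phrases this via the sets $V_i^O$ and $V_i^I$ rather than per vertex, but the argument is identical), giving $|Q_i|\leq|M_i|$ by minimality. The running-time analysis also matches the paper's.
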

\begin{proof} The algorithm runs in $O(kL \log^*{|V|})=O(\log^*{|V|})$ rounds (where $L$ is the constant from Theorem \ref{waw-thm}) because $diam(G[V_i])=O(kL)$ by Theorem \ref{waw-thm} and the construction, and so step 4 requires $O(kL)$ rounds in the {\it Local} model.

Let $M$ be an optimal distance-$k$ dominating set in $G$, $i\in \{1, \dots, l\}$, and $M_i:=M\cap V_i$.
Let $V_i^O$ denote the set of vertices $w\in V_i$ such that $d_G(w, \partial(P')\cap V_i)\leq k$, and let $V_i^I:=V_i\setminus V_i^O$. 
Clearly every vertex from $V_i^O$ is distance-$k$ dominated by $\partial(P')\cap V_i$. In addition, if $w\in V_i^I$, then $w$ must be distance-$k$ dominated by a vertex from $M_i$. Thus by step 4 of {\sc $k$-DomSet Approximation}, we have $|Q_i|\leq |M_i|$, and $|Q|\leq |\partial(P')|+\sum_{i=1}^l |M_i|$, 
which in view of (\ref{partial2}) gives $|Q|\leq 2\epsilon k^2t^{2kt^k} |D|+|M|.$
Further by (\ref{eq-C}) and the definition of $\epsilon$ we have
$$
|Q|\leq (2\epsilon k^2t^{2kt^k}\delta+1)\gamma_{k}(G)=(1+\alpha)\gamma_k(G).
$$

\end{proof}
\subsection{$K_t$-minor free graphs of a constant maximum degree}
In this last short section, we show a simple method to find a $(1+\epsilon)$-factor approximation of the minimum distance-$k$ dominating set if $G$ is $K_t$-minor-free and the maximum degree of $G$ satisfies $\Delta(G)\leq L$ for some $L$ independent of $G$. 

In fact, we will give an algorithm for a somewhat more general class of $K_t$-minor-free graphs that we call $(C, \gamma_k)$-bounded. 

Note that obviously, for every graph $G$ and every $i\in Z^+$, we have $\gamma_{i}(G)\geq \gamma_{i+1}(G)$.

Fix $k\in Z^+$. We say that a graph $G$ is $(C,\gamma_k)$-bounded if  $\gamma_1(G)\leq C \gamma_k(G)$. For example, if for some $L\geq 3$, graph $G=(V,E)$ is such that $\Delta(G)\leq L$, then $\gamma_{k}(G)> |V|/L(L-1)^k\geq  \gamma_1(G)/L(L-1)^k$ and so $G$ is $(C,\gamma_k)$-bounded with $C=L(L-1)^k$.

The algorithm is very simple and we will only outline the main idea.
Fix $C,k, t$ which are known to the algorithm and let $G$ be a graph that is $K_t$-minor-free and $(C,\gamma_k)$-bounded. Find a constant approximation $S$ of a minimum dominating set in $G$ (i.e. distance-$k$ dominating set  with $k=1$) by using the algorithm from \cite{CHWW}. Partition $V(G)$ into $\{S_v|v\in S\} $ by setting $S_v=\{v\}$ and adding $u$ to $S_v$  if $uv\in E$ and $v$ has the maximum ID over vertices from $S$. Construct $H$ by contracting each $S_v$ to a vertex. Set $\epsilon$ appropriately and use Theorem \ref{waw-thm} to find a partition $(W_1,\dots, W_l)$ of $V(H)$. If a vertex in $S_v\in W_i$ has a neighbor in $W_j$ for some $j\neq i$ in $G$ (border set), then add the center of $S_v$, namely $v$, to $D$. Finally, for each $i=1 ,\dots, l$ find an optimal set $D_i$ in $G$ such that $D_i\cup D$ distance-$k$ dominates $V_i:=\bigcup_{S_u\in W_i} S_u$ in $G$. 
\begin{theorem}\label{const-deg-thm}
Let $C,k,t\in Z^+$ and let $0<\alpha <1$. There is a distributed algorithm which given a $K_t$-minor-free graph $G=(V,E)$ which is $(C,\gamma_k)$-bounded, finds in $O(\log^*{|V|})$ rounds a set $D\subseteq V(G)$ such that $|D|\leq (1+\alpha)\gamma_k(G).$  
\end{theorem}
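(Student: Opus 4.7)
The plan is to analyze the outlined algorithm, bounding $|Q| = |D| + \sum_i |D_i|$ in terms of $\gamma_k(G)$ and separately bounding the number of rounds.

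First, the algorithm from \cite{CHWW} returns a dominating set $S$ with $|S| \leq c\,\gamma_1(G)$ for some constant $c$ depending on $t$, in $O(1)$ rounds, and since $G$ is $(C,\gamma_k)$-bounded this yields $|S| \leq cC\,\gamma_k(G)$. The clusters $\{S_v : v \in S\}$ partition $V$, and each $S_v$ has diameter at most $2$ in $G$ because every element of $S_v \setminus \{v\}$ is a neighbor of $v$. Since $K_t$-minor-freeness is preserved under taking minors, the contracted graph $H$ is $K_t$-minor-free, so Theorem \ref{waw-thm} applies. Setting $\epsilon = \alpha/(cC)$, the partition $(W_1, \dots, W_l)$ of $V(H)$ produced by Theorem \ref{waw-thm} satisfies $|\partial(P)| \leq \epsilon |V(H)| = \epsilon |S|$ and each $H[W_i]$ has diameter $O(L)$. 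The corresponding $V_i := \bigcup_{S_v \in W_i} S_v$ then has diameter $O(L)$ in $G$, and because $D$ consists of centers of border $S_v$'s we have $|D| \leq |\partial(P)| \leq \epsilon|S| \leq \alpha\,\gamma_k(G)$.

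Next I would bound each $|D_i|$ against the optimum restricted to $V_i$. Let $M$ be a minimum distance-$k$ dominating set of $G$, let $M_i = M \cap V_i$, and split $V_i = V_i^O \cup V_i^I$ where $V_i^O = \{w \in V_i : d_G(w, D \cap V_i) \leq k\}$. Vertices of $V_i^O$ are distance-$k$ dominated by $D \cap V_i$ by definition. For $w \in V_i^I$, pick any $m \in M$ with $d_G(w,m) \leq k$; I claim $m \in V_i$. Otherwise a shortest $w$-$m$ path crosses the border at an edge $(u_1,u_2)$ with $u_1 \in V_i$ and $u_2 \notin V_i$; then $u_1 \in S_v$ where $S_v$ is a border set, so $v \in D \cap V_i$, and
$$d_G(w,v) \leq d_G(w,u_1) + d_G(u_1,v) \leq (k-1) + 1 = k,$$
contradicting $w \in V_i^I$. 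Hence $m \in M_i$, so $M_i$ distance-$k$ dominates $V_i^I$, and therefore $M_i \cup (D \cap V_i)$ distance-$k$ dominates $V_i$. By optimality of $D_i$ we obtain $|D_i| \leq |M_i|$, so $\sum_i |D_i| \leq |M| = \gamma_k(G)$, and combined with the bound on $|D|$ this gives $|Q| \leq (1+\alpha)\gamma_k(G)$.

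For the round complexity, the dominating-set approximation runs in $O(1)$ rounds, Theorem \ref{waw-thm} runs in $L\log^*|V|$ rounds, and step 4 runs in $O(L+k)$ rounds because each $V_i$ has diameter $O(L)$ in $G$, so every vertex of $V_i$ can collect all of $V_i$ together with its $k$-neighborhood and then locally solve the optimization problem. The main subtlety is the border argument in the second paragraph: the fact that each $S_v$ has diameter at most $2$ is what ensures that a $w \in V_i$ close to a border edge is already within distance $k$ of a $D$-vertex, keeping the approximation ratio $1+\alpha$ rather than $1+O(\alpha)$; if the seed dominating set were not a distance-$1$ set of clusters of bounded diameter, one would have to pay extra additive constants in the distance estimate and lose the clean $(1+\alpha)$ bound.
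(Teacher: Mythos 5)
Your proposal is correct and follows essentially the same route as the paper's (sketched) proof: bound the border contribution by $\epsilon\,c\,C\,\gamma_k(G)\leq\alpha\gamma_k(G)$ using the $(C,\gamma_k)$-boundedness, and show via the border-crossing-edge argument (the $\leq (k-1)+1=k$ estimate, which is the paper's ``$vxP$'' step) that any vertex of $V_i$ dominated from outside $V_i$ is already dominated by a border center, so $|D_i|\leq|M_i|$. In fact you supply more detail than the paper's sketch, and your account of the round complexity and of why each $S_v$ having diameter at most $2$ matters is consistent with the intended argument.
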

\begin{proof} (Sketch) The argument is analogous to the proof of Theorem \ref{main-approx-thm}. In particular, it is easy to see that the number of vertices added to $D$ from border sets $S_v$ is  $O(\epsilon \gamma_1(G))$ which can be made smaller than $\alpha \gamma_k(G)$ using appropriately defined $\epsilon$ and the fact that $G$ is $(C,\gamma_k)$-bounded. Now assume $w \in V_i$ and $w$ is distance-$k$ dominated by a vertex $u\in V_j$ for some $j\neq i$. Then a shortest $u,w$-path $P$ in $G$ contains a vertex $x$ from some border set $S_v\in W_i$. Then however $vxPu$ has length which is less than or equal to the length of $P$ and so the center of $S_v$,  which is added to $D$, distance-$k$ dominates $w$. Finally, vertices $w\in V_i$ that are not distance-$k$ dominated by vertices from other Voronoi cells than $V_i$ are distance-$k$ dominated by $D_i$ and $|D|=\sum_{i}|D_i|\leq \gamma_k(G)$. \end{proof} 

\section{Conclusions}
We finish  with a short summary. In this paper we considered the distance-$k$ dominating set problem in a special class of graphs and showed three facts. 
\begin{itemize}
    \item There is a (simple) distributed (LOCAL) constant-factor approximation of an optimal distance-$k$ dominating set in graphs $G$ that have no $K_{2,t}$-minor. The algorithm runs in $O_t(k)$ rounds.
    \item There is a distributed (LOCAL) algorithm which given $\epsilon>0$ finds in a $K_{2,t}$-minor free graph $G$ of order $n$ a distance-$k$ dominating set of size at most $(1+\epsilon)\gamma_k(G).$ The algorithm runs in $O_{\epsilon, k, t}(\log^*{n})$ rounds. 
    \item There is  a distributed (LOCAL) algorithm which given $\epsilon>0$ finds in a $K_{t}$-minor free graph $G$ of a constant maximum degree and order $n$ and a distance-$k$ dominating set of size at most $(1+\epsilon)\gamma_k(G).$ The algorithm runs in $O_{\epsilon, k, t,\Delta(G)}(\log^*{n})$ rounds. 
    \end{itemize}
    The proofs of the first two statements critically rely on the fact that graph $G$ is $K_{2,t}$-minor free and the third one applies only to very restrictive class of graphs, the class of $K_t$-minor-free graphs of a constant maximum degree. It would be interesting to see if similar facts can be obtained for graphs with no $K_{3,t}$-minor, or in general for graphs which are $K_t$-minor free.

\end{document}